\documentclass[a4paper,reqno,twoside]{amsart}


\usepackage[left=3.5cm,right=3.5cm,top=3.5cm,bottom=3.5cm,footskip=1.5cm,headsep=1cm,bindingoffset=0.cm,]{geometry}

\usepackage[utf8]{inputenc}
\usepackage[english]{babel}
\usepackage[T1]{fontenc} 
\usepackage{lmodern} 
\rmfamily 
\DeclareFontShape{T1}{lmr}{b}{sc}{<->ssub*cmr/bx/sc}{}
\DeclareFontShape{T1}{lmr}{bx}{sc}{<->ssub*cmr/bx/sc}{}

\numberwithin{equation}{section}
\usepackage{amsmath}
\usepackage{amssymb}
\usepackage{amsthm}
\usepackage{amsfonts}
\usepackage{mathtools}

\usepackage{mathdots}
\usepackage{caption}
\usepackage{subcaption}

\usepackage{dsfont} 
\usepackage[format=hang]{caption}
\usepackage[normalem]{ulem}

\usepackage{standalone}
\usepackage{graphicx}
\usepackage{imakeidx}
\makeindex
\usepackage{setspace}
\usepackage{appendix}
\usepackage{pdfpages}
\usepackage{upgreek}
\usepackage{tabulary}
\usepackage{booktabs}
\usepackage{extarrows}
\usepackage{tabularx}
\usepackage{float}
\restylefloat{table}
\usepackage{enumitem}
\usepackage{csquotes}
\usepackage{bm}
\usepackage{orcidlink}
\usepackage{lipsum}
\usepackage{nicematrix}

\usepackage{tikz}
\usepackage{tikz-layers}
\usepackage{tikz-cd}
\usepackage[arrow, matrix, curve]{xy}
\usetikzlibrary{matrix,positioning,decorations.pathreplacing,calc}
\usetikzlibrary{
    decorations.text,%
    decorations.markings,%
    shadows}
\usetikzlibrary{calc,intersections}
\usepackage{adjustbox}
\usepackage{graphicx}

\usepackage[
    style=numeric,
    sorting=nty,
    maxnames=99,
    maxalphanames=5,
    natbib=true,
    backend=bibtex,
    sortcites]{biblatex}

\DeclareNameAlias{default}{family-given}

\AtEveryBibitem{
    \clearfield{url}
    \clearfield{issn}
    \clearfield{isbn}
    \clearfield{urldate}

    \ifentrytype{book}{
        \clearfield{pages}}{%
    }
}

\usepackage{xargs}                      
\usepackage[prependcaption,textsize=tiny,textwidth=3cm]{todonotes}
\newcommandx{\unsure}[2][1=]{\todo[linecolor=red,backgroundcolor=red!25,bordercolor=red,#1]{#2}}
\newcommandx{\change}[2][1=]{\todo[linecolor=blue,backgroundcolor=blue!25,bordercolor=blue,#1]{#2}}
\newcommandx{\info}[2][1=]{\todo[linecolor=OliveGreen,backgroundcolor=OliveGreen!25,bordercolor=OliveGreen,#1]{#2}}
\newcommandx{\improvement}[2][1=]{\todo[linecolor=black,backgroundcolor=black!25,bordercolor=black,#1]{#2}}
\newcommandx{\thiswillnotshow}[2][1=]{\todo[disable,#1]{#2}}

\usepackage[noabbrev,capitalize]{cleveref}
\crefname{proposition}{Proposition}{Propositions}
\crefname{equation}{}{}

\newtheorem{theorem}{Theorem}[section]
\newtheorem{lemma}[theorem]{Lemma}

\newtheorem{corollary}[theorem]{Corollary}

\theoremstyle{definition}
\newtheorem{definition}[theorem]{Definition}
\newtheorem{example}[theorem]{Example}

\newtheorem{remark}[theorem]{Remark}

\crefname{assumption}{Assumption}{Assumptions}
\crefname{definition}{Definition}{Definitions}
\crefname{corollary}{Corollary}{Corollaries}
\crefname{enumi}{item}{items}

\creflabelformat{subfigure}{#2\textsc{#1}#3}


\usepackage[final]{microtype}

\makeatletter
\newsavebox\myboxA
\newsavebox\myboxB
\newlength\mylenA

\newcommand*\xoverline[2][0.75]{%
  \sbox{\myboxA}{$\m@th#2$}%
  \setbox\myboxB\null
  \ht\myboxB=\ht\myboxA%
  \dp\myboxB=\dp\myboxA%
  \wd\myboxB=#1\wd\myboxA
  \sbox\myboxB{$\m@th\overline{\copy\myboxB}$}
  \setlength\mylenA{\the\wd\myboxA}
  \addtolength\mylenA{-\the\wd\myboxB}%
  \ifdim\wd\myboxB<\wd\myboxA%
    \rlap{\hskip 0.5\mylenA\usebox\myboxB}{\usebox\myboxA}%
  \else
    \hskip -0.5\mylenA\rlap{\usebox\myboxA}{\hskip 0.5\mylenA\usebox\myboxB}%
  \fi}
\makeatother

\usepackage{fancyhdr}

\pagestyle{fancy}
\fancyhf{}
\fancyhead[CE]{\nouppercase{\footnotesize\textsc{\rightmark}}}
\fancyhead[CO]{\footnotesize\textsc{\nouppercase\leftmark}}
\fancyfoot[CE,CO]{\footnotesize\thepage}
\fancyfoot[RO,LE]{}

\fancypagestyle{plain}{%
    \fancyhead[C]{}
    \fancyfoot[C]{}
    
    }
\fancypagestyle{nosection}{%
    \fancyhead[CE]{}
    \fancyhead[CO]{}
    \fancyfoot[CE,CO]{\thepage}
}

\usepackage{tikz}

\usetikzlibrary{matrix} 
\usetikzlibrary{arrows,arrows.meta,bending} 

\usepackage{nicematrix}

\newcommand{\propmat}{P_{v_i,\ell_i,s_i,\gamma_i}(\lambda)}
\newcommand{\symmpropmat}{\Tilde{P}_{v_i,\ell_i,s_i,\gamma_i}(\lambda)}

\usetikzlibrary{hobby}

\usepackage{calc}

\newcommand{\lz}{\ell_2(\Z)}
\newcommand{\ldz}{\mathbb{L}_D(\Z)}
\newcommand{\ldm}{\mathbb{L}_D(M)}

\newcommand{\VJV}{\mc V^\frac{1}{2} \mc J^\gamma \mc V^\frac{1}{2}}

\newcommand{\qq}[1]{\bm q^{(#1)}}
\newcommand{\sss}[1]{\bm s^{(#1)}}
\newcommand{\uu}[1]{\bm u^{(#1)}}

\newcommand{\oo}[1]{\chi^{(#1)}}

\DeclareMathOperator{\len}{len}

\DeclareMathOperator{\Z}{\mathbb{Z}}

\DeclareMathOperator{\R}{\mathbb{R}}
\DeclareMathOperator{\C}{\mathbb{C}}

\DeclareMathOperator{\tr}{tr}
\renewcommand{\i}{\mathbf{i}}


\newcommand{\ds}{\displaystyle}

\DeclareMathOperator{\diag}{diag}
\DeclareMathOperator{\BO}{\mathcal{O}}

\DeclareMathOperator{\capmatg}{\mathcal{C}^\gamma}

\renewcommand{\epsilon}{\varepsilon}
\DeclareMathOperator{\dd}{d\!}

\renewcommand{\i}{\mathbf{i}}

\DeclareMathOperator{\iL}{{\mathsf{L}}}
\DeclareMathOperator{\iR}{{\mathsf{R}}}
\DeclareMathOperator{\iLR}{{\mathsf{L},\mathsf{R}}}

\usepackage{tcolorbox}
\usetikzlibrary{tikzmark,calc}



\newcommand{\mc}[1]{\mathcal{#1}}

\newcommand{\abs}[1]{\left\lvert#1\right\rvert}

\newcommand{\norm}[1]{\left\lVert#1\right\rVert}

\newcommand{\veci}[2]{\bm{#1}^{(#2)}}


\addbibresource{references.bib}

\newcommandx{\silvio}[2][1=]{\todo[linecolor=blue,backgroundcolor=blue!25,bordercolor=blue,#1]{Silvio: #2}}

\newcommandx{\alex}[2][1=]{\todo[linecolor=red,backgroundcolor=red!25,bordercolor=red,#1]{Alex: #2}}

\newcommandx{\clemens}[2][1=]{\todo[linecolor=cyan,backgroundcolor=cyan!25,bordercolor=cyan,#1]{Clemens: #2}}

\title[Competing edge and bulk localisation in non-reciprocal disordered systems]{Competing edge and bulk localisation in non-reciprocal disordered systems}

\begin{document}
\author[H. Ammari]{Habib Ammari \,\orcidlink{0000-0001-7278-4877}}
\address{\parbox{\linewidth}{Habib Ammari\\
 ETH Z\"urich, Department of Mathematics, Rämistrasse 101, 8092 Z\"urich, Switzerland, \href{http://orcid.org/0000-0001-7278-4877}{orcid.org/0000-0001-7278-4877}}.}
 \email{habib.ammari@math.ethz.ch}
 \thanks{}

\author[S. Barandun]{Silvio Barandun\,\orcidlink{0000-0003-1499-4352}}
  \address{\parbox{\linewidth}{Silvio Barandun\\
 ETH Z\"urich, Department of Mathematics, Rämistrasse 101, 8092 Z\"urich, Switzerland, \href{http://orcid.org/0000-0003-1499-4352}{orcid.org/0000-0003-1499-4352}}.}
 \email{silvio.barandun@sam.math.ethz.ch}

\author[C. Thalhammer]{Clemens Thalhammer}
\address{\parbox{\linewidth}{Clemens Thalhammer\\
 ETH Z\"urich, Department of Mathematics, Rämistrasse 101, 8092 Z\"urich, Switzerland.}}
 \email{clemens.thalhammer@sam.math.ethz.ch}

 \author[A. Uhlmann]{Alexander Uhlmann\,\orcidlink{0009-0002-0426-6407}}
  \address{\parbox{\linewidth}{Alexander Uhlmann\\
  ETH Z\"urich, Department of Mathematics, Rämistrasse 101, 8092 Z\"urich, Switzerland, \href{http://orcid.org/0009-0002-0426-6407}{orcid.org/0009-0002-0426-6407}}.}
 \email{alexander.uhlmann@sam.math.ethz.ch}

\begin{abstract}
We investigate the competing mechanisms of localisation in one-dimensional block disordered subwavelength resonator systems subject to non-reciprocal damping, induced by an imaginary gauge potential. Using a symmetrisation approach to enable the adaptation of tools from Hermitian systems, we derive the limiting spectral distribution of these systems as the number of blocks goes to infinity and characterise their spectral properties in terms of the spectral properties of their constituent blocks. By employing a transfer matrix approach, we then clarify, in terms of Lyapunov exponents, the competition between the edge localisation due to imaginary gauge potentials and the bulk localisation due to disorder. In particular, we demonstrate how the disorder acts as insulation against the non-Hermitian skin effect, preventing edge localisation for small imaginary gauge potentials.
\end{abstract}

\maketitle

\date{}

\bigskip

\noindent \textbf{Keywords.} Non-Hermitian disordered system, imaginary gauge potential, skin effect, random block system, subwavelength regime, edge localisation, bulk localisation, hybridisation, ergodicity, Jacobi matrices and operators\par

\bigskip

\noindent \textbf{AMS Subject classifications.}
35B34, 
35J05, 
35C20, 
47B36, 
81Q12.  
\\

\section{Introduction}
In this paper, we study the spectral properties of large one-dimensional resonator systems with non-reciprocal, directional damping terms. Due to non-reciprocity, wave propagation in such systems is usually amplified in one direction and attenuated in the other. This inherent unidirectional dynamics is related to the non-Hermitian skin effect, which leads to the accumulation of modes at one edge of the structure, a phenomenon unique to non-Hermitian systems with non-reciprocal coupling; see \cite{yao.wang2018Edge} where the term “non-Hermitian skin effect” was first coined. 
We also refer the reader to \cite{hatano.nelson1996Localization,yokomizo.yoda.ea2022NonHermitiana,rivero.feng.ea2022Imaginary, zhang.zhang.ea2022review, okuma.kawabata.ea2020Topological,lin.tai.ea2023Topological,yokomizo.yoda.ea2022NonHermitiana,wang.chong2023NonHermitian,leykam.bliokh.ea2017Edge,borgnia.kruchkov.ea2020Nonhermitian,okuma.sato2023Nonhermitian,ammariMathematicalFoundationsNonHermitian2024,ammari2023skin3d} for the theoretical understanding of the skin effect in non-Hermitian physical systems and to \cite{ghatak.brandenbourger.ea2020Observation,longhi.gatti.ea2015Robust,franca.konye.ea2022Nonhermitian, wang.wang.ea2022NonHermitian} for its experimental realisation in topological photonics, phononics, and other condensed matter systems.

Wave propagation in the systems of resonators considered in this paper is modelled by the Helmholtz equation \eqref{eq:coupled ods} modified by an imaginary gauge potential $\gamma$, where we assume that the directional damping terms are added only inside the subwavelength resonators and that the contrast between the material parameters inside and outside the resonators, $\rho_b/\rho$, is small and the wave speeds $v$ and $v_i$ for $i=1,\ldots, N,$ are of order one with $N$ being the number of resonators. As shown in \cite{ammariMathematicalFoundationsNonHermitian2024}, the directional damping terms break the Hermiticity, time-reversal symmetry, and reciprocity of the problem and constitute the crucial mechanism responsible for the condensation effects at one edge of the system. It is worth emphasising that their effect would be negligible without exciting the system's subwavelength resonant frequencies; see, for instance, \cite{ammari.davies.ea2024Functional}.

For a periodically arranged resonator system, based on Toeplitz theory, it was proved in \cite{ammariMathematicalFoundationsNonHermitian2024} that the system possesses $N$ subwavelength resonant frequencies, \emph{i.e.}, with corresponding eigenmodes having wavelengths much larger than the typical size of the resonators or, equivalently, being almost constant inside the resonators and accumulating, depending on the sign of $\gamma$, at one edge of the structure. Furthermore, the subwavelength resonant frequencies and eigenmodes can be approximated using the generalised gauge capacitance matrix given by $VC^\gamma$, where $C^\gamma$ and $V$ are defined by \cref{eq:Cdef} and \cref{eq:vgammadef}, respectively. On the other hand, as proved in \cite{stabilityskin}, the skin effect in systems of non-Hermitian subwavelength resonators is quite robust with respect to small random perturbations in the system's material parameters. 
Nevertheless, for large enough random perturbations, some eigenmodes become localised in the bulk. Moreover, as the strength of the random perturbations increases, more and more eigenmodes become localised in the bulk. This competition between the localisation at one edge of the structure and the localisation in the bulk was first illustrated numerically in \cite{stabilityskin}. 

In this work,  we consider large arrays of subwavelength resonators with non-reciprocal, directional damping terms, constructed by sampling randomly from a fixed set of \enquote{blocks} and analyse their unusual spectral behaviour.  

By symmetrising the generalised gauge capacitance matrix, we derive the limiting spectral distribution of non-reciprocal block disordered systems as the number of blocks goes to infinity and characterise their spectral gaps in terms of the spectral gaps of its constituent blocks. We also elucidate the mechanisms of localisation in disordered non-reciprocal systems. We show that the non-reciprocity pushes the eigenmodes to be localised at the edge as a result of the imaginary gauge potentials, while the disorder causes the eigenmodes to become localised in the bulk. Exhibiting these two mechanisms simultaneously leads to an intricate competition between edge and bulk localisation.
To understand this localisation, we investigate the Lyapunov exponent, which, as shown in \cite{ammari.barandun.ea2025Subwavelength}, allows us to predict the decay of a given eigenmode at a given frequency. Finally, we show how the disorder acts as an \emph{insulation} against the non-Hermitian skin effect, preventing edge localisation for small imaginary gauge potentials, while for finite periodic systems consisting of repeated identical unit cells, any small gauge potential leads to complete localisation of all eigenmodes at the edge.

This paper is organised as follows. In \cref{sec:setting}, we formulate the setting of non-reciprocal block disordered arrays of subwavelength resonators and introduce the fundamental tools to study them, namely the (symmetrised) capacitance matrix and propagation matrices. In \cref{sec:jacobi}, we adapt the tools introduced in \cite{ammari.barandun.ea2025Universal} to the symmetrised capacitance matrix to investigate the spectral convergence as the number of blocks $M\to \infty$. In \cref{sec:dos}, we employ the propagation matrix formalism to more explicitly characterise the density of states of the resonator system, based on the properties of its constituent blocks. Finally, in \cref{sec:competition}, we give a decomposition of the system's Lyapunov exponent into a disordered term and a non-reciprocal term. This enables a precise discussion of the competition between disorder-induced bulk localisation and non-reciprocity-induced edge localisation.

\section{Setting}\label{sec:setting}
In this section, we first introduce the non-reciprocal subwavelength setting and recall from \cite{ammariMathematicalFoundationsNonHermitian2024} the discrete approximation of the eigenfrequencies and eigenmodes of the resonator system in terms of the eigenvalues and eigenvectors of the gauge capacitance matrix, which can be symmetrised. Then, we introduce the notation and the construction associated with block disordered non-reciprocal systems and adapt the propagation matrix approach from  \cite{ammari.barandun.ea2025Subwavelength} to the non-reciprocal setting. 

\subsection{non-reciprocal arrays of subwavelength resonators}
As in \cite{ammari.barandun.ea2025Universal,ammari.barandun.ea2025Subwavelength}, the central systems of interest are one-dimensional chains of $N$ disjoint subwavelength resonators $D_i\coloneqq (x_i^{\iL},x_i^{\iR})$, where $(x_i^{\iLR})_{1\<i\<N} \subset \R$ are the $2N$ extremities satisfying $x_i^{\iL} < x_i^{\iR} <  x_{i+1}^{\iL}$ for any $1\leq i \leq N$. We fix the coordinates such that $x_1^{\iL}=0$. We denote by $\ell_i = x_i^{\iR} - x_i^{\iL}$ the length of the $i$\textsuperscript{th} resonator,  and by $s_i= x_{i+1}^{\iL} -x_i^{\iR}$ the spacing between the $i$\textsuperscript{th} and $(i+1)$\textsuperscript{th} resonator. We use 
\begin{align*}
  \mc D\coloneqq \bigcup_{i=1}^N(x_i^{\iL},x_i^{\iR})
\end{align*}
to symbolise the set of subwavelength resonators.

In this work, following \cite{ammariMathematicalFoundationsNonHermitian2024,yokomizo.yoda.ea2022NonHermitiana}, we consider the following generalised Sturm--Liouville equation where an imaginary gauge potential $\gamma$ is introduced:
\begin{align}
    -\frac{\omega^{2}}{\kappa(x)}u(x)- \gamma(x) \frac{\dd}{\dd x}u(x)-\frac{\dd}{\dd x}\left( \frac{1}{\rho(x)}\frac{\dd}{\dd
    x}  u(x)\right) =0,\qquad x \in\R,
    \label{eq: gen Sturm-Liouville}
\end{align}
for piecewise constant coefficients
\begin{align*}
\gamma(x) = \begin{dcases}
    \gamma_i,\quad x\in D_i,\\
    0,\quad x \in \R\setminus \mc D,
\end{dcases}
\end{align*}
\begin{align*}
    \kappa(x)=
    \begin{dcases}
        \kappa_i & x\in D_i,\\
        \kappa&  x\in\R\setminus \mc D,
    \end{dcases}\quad\text{and}\quad
    \rho(x)=
    \begin{dcases}
        \rho_b & x\in D_i,\\
        \rho&  x\in\R\setminus \mc D,
    \end{dcases}
\end{align*}
where the constants $\gamma_i, \rho_b, \rho, \kappa_i, \kappa \in \R_{>0}$. The wave speeds inside the resonators $D_i$ and inside the background medium $\R\setminus \mc D$, are denoted respectively by $v_i$ and $v$,  and the contrast between the densities of the resonators and the background medium by $\delta$:
\begin{align}
    v_i:=\sqrt{\frac{\kappa_i}{\rho_b}}, \qquad v:=\sqrt{\frac{\kappa}{\rho}}, \qquad
    \delta:=\frac{\rho_b}{\rho}.
\end{align}

In these circumstances of piecewise constant material parameters, the wave problem determined by \eqref{eq: gen Sturm-Liouville} can be rewritten as the following system of coupled one-dimensional equations:

\begin{align}
    \begin{dcases}
    \frac{{\dd}^{2}}{\dd x^2}u(x) + \gamma_i \frac{\dd}{\dd x}u(x) +\frac{\omega^2}{v_i^2}u(x)=0, & x\in D_i,\\
    \frac{\dd^2}{\dd x^2}u(x)  + \frac{\omega^2}{v^2}u(x)=0, & x\in\R\setminus \mc D,\\
        u\vert_{\iR}(x^{\iLR}_i) - u\vert_{\iL}(x^{\iLR}_i) = 0, & \text{for all } 1\leq i\leq N,\\
        \left.\frac{\dd u}{\dd x}\right\vert_{\iR}(x^{\iL}_{{i}})=\delta\left.\frac{\dd u}{\dd x}\right\vert_{\iL}(x^{\iL}_{{i}}), & \text{for all } 1\leq i\leq N,\\
        \delta\left.\frac{\dd u}{\dd x}\right\vert_{\iR}(x^{\iR}_{{i}})=\left.\frac{\dd u}{\dd x}\right\vert_{\iL}(x^{\iR}_{{i}}), & \text{for all } 1\leq i\leq N,\\
        \frac{\dd u}{\dd\ \abs{x}}(x) -\i \frac{\omega}{v} u(x) = 0, & x\in(-\infty,x_1^{\iL})\cup (x_N^{\iR},\infty),
    \end{dcases}
\label{eq:coupled ods}
\end{align}
where, for a one-dimensional function $w$, we denote by
\begin{align*}
    w\vert_{\iL}(x) \coloneqq \lim_{\substack{s\to 0\\ s>0}}w(x-s) \quad \mbox{and} \quad  w\vert_{\iR}(x) \coloneqq \lim_{\substack{s\to 0\\ s>0}}w(x+s)
\end{align*}
if the limits exist. 

We are interested in resonances $\omega\in\C$ such that \eqref{eq:coupled ods} has a nontrivial solution. In \cite{ammariMathematicalFoundationsNonHermitian2024}, an asymptotic analysis is derived in a high contrast limit, given by $\delta\to 0$. 
Within this regime, the subwavelength resonant frequencies and their corresponding eigenmodes can be characterised as the eigenvalues and eigenvectors of the so-called \emph{gauge capacitance matrix}. The following results are from \cite{ammariMathematicalFoundationsNonHermitian2024}.
\begin{theorem} \label{thm:gauge}
    Let $\gamma \neq 0$. Let the gauge capacitance matrix $C^\gamma =(C_{i,j}^\gamma)^N_{i,j=1}$ and the \emph{material matrix} $V$ be respectively
 defined by
    \begin{align}
        \label{eq:Cdef}
        C_{i,j}^\gamma \coloneqq \begin{dcases}
            \frac{\zeta(\gamma_1\ell_1)}{s_1},                    & i=j=1,                 \\
             \frac{\zeta(-\gamma_i\ell_i)}{s_{i-1}}+\frac{\zeta(\gamma_i\ell_i)}{s_i},                & 1< i=j< N,             \\
            \frac{\zeta(-\gamma_N\ell_N)}{ s_{N-1}},       & i=j=N,                 \\
            -\frac{\zeta(\pm\gamma_i\ell_i)}{s_{\min(i,j)}} , & 1\leq i\pm 1=j \leq N, \\
            \ 0,                                                 & \text{else},
        \end{dcases}
    \end{align}
where \begin{equation} \label{def:zeta}
\zeta(z)\coloneqq \frac{z}{1-e^{-z}}>0
\end{equation} 
and
\begin{gather} 
    V = \diag \left({\frac{v_1^2}{\ell_1}}, \dots,  \frac{v_N^2}{\ell_N} \right) \in \mathbb{R}^{N\times N}. \label{eq:vgammadef}
 \end{gather} 
    Then, the $N$ subwavelength eigenfrequencies $\omega_i$ of (\ref{eq: gen Sturm-Liouville}) associated to this system satisfy, as $\delta\to0$,
              \begin{align*}
                  \omega_i =  \sqrt{\delta\lambda_i} + \BO(\delta),
              \end{align*}
              where $(\lambda_i)_{1\leq i\leq N}$ are the eigenvalues of the eigenvalue problem 
              \begin{equation} \label{eq:ch}
              V C^\gamma \bm u_i = \lambda_i \bm u_i.
              \end{equation}
              Furthermore, let $u_i$ be a subwavelength eigenmode corresponding to $\omega_i$ and let $\bm u_i$ be the corresponding eigenvector of the generalised gauge capacitance matrix $\capmatg:= VC^\gamma$. Then, 
              \begin{equation} \label{eq:ui}
                  u_i(x) = \sum_j \bm u_i^{(j)}V_j(x) + \BO(\delta),
              \end{equation}
              where $\bm u_i^{(j)}$ is the $j$\textsuperscript{th} entry of the vector $\bm u_i$ and $V_j(x)$ are defined as the solution to
              \begin{align}
                  \begin{dcases}
                      -\frac{\dd{^2}}{\dd x^2} V_j(x) =0, & x\in\R\setminus \mc D, \\
                      V_j(x)=\delta_{ij},              & x\in D_i,                          \\
                      V_j(x) = \BO(1),                  & \mbox{as } \abs{x}\to\infty.
                  \end{dcases}
                  \label{eq: def V_i}
              \end{align}
\end{theorem}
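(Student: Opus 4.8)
The plan is to prove this through a high-contrast asymptotic analysis as $\delta\to0$, following the capacitance-matrix scheme of \cite{ammariMathematicalFoundationsNonHermitian2024}; in one dimension this can be carried out by solving the interior equations exactly and matching through the transmission conditions. Guided by the stated conclusion, I would seek resonances in the subwavelength regime $\omega^2=\BO(\delta)$ and write $\omega^2=\delta\lambda+o(\delta)$. On each resonator $D_i$ the equation $u''+\gamma_i u'+\frac{\omega^2}{v_i^2}u=0$ has, since the zeroth-order term is $\BO(\delta)$, the leading profile $u(x)=\alpha_i+\beta_i e^{-\gamma_i(x-x_i^{\iL})}$ plus an $\BO(\delta)$ particular part; the transmission conditions force $\beta_i=\BO(\delta)$, so $u\equiv\alpha_i+\BO(\delta)$ is \emph{almost constant} inside $D_i$, and these constants will be identified with the entries $\bm u_i^{(j)}$.

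Next I would treat the exterior. For small $\omega$ the equation $u''+\frac{\omega^2}{v^2}u=0$ degenerates to $u''=0$, so $u$ is affine on each gap and, together with the outgoing conditions at $\pm\infty$, solves exactly the problem \eqref{eq: def V_i} defining the $V_j$; the exterior slopes $\tfrac{p_{i+1}^{\iL}-p_i^{\iR}}{s_i}$ (with $p_i^{\iLR}\coloneqq u(x_i^{\iLR})$) furnish the tridiagonal, nearest-neighbour coupling. Imposing the flux transmission conditions equates each one-sided interior derivative to $\delta$ times the adjacent exterior slope, giving two relations per resonator that tie the interior data $(\alpha_i,\beta_i)$ to the neighbouring potentials. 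The crucial algebraic point is the elimination of the exponential coefficients $\beta_i$: because the interior profile has slopes $-\gamma_i\beta_i$ and $-\gamma_i\beta_i e^{-\gamma_i\ell_i}$ at the left and right endpoints, relating these endpoint fluxes to the potentials introduces the factor $\frac{\gamma_i}{1-e^{-\gamma_i\ell_i}}=\frac{\zeta(\gamma_i\ell_i)}{\ell_i}$ at one end and $\frac{\gamma_i}{e^{\gamma_i\ell_i}-1}=\frac{\zeta(-\gamma_i\ell_i)}{\ell_i}$ at the other; this is precisely the left/right asymmetry recorded by the arguments $\pm\gamma_i\ell_i$ in \eqref{eq:Cdef}, and it collapses to the symmetric capacitance matrix of the reciprocal problem as $\gamma_i\to0$ since $\zeta(0)=1$.

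Collecting the $N$ eliminated relations and using that the interior flux imbalance $[u'+\gamma_i u]_{x_i^{\iL}}^{x_i^{\iR}}$ equals $-\int_{D_i}\tfrac{\omega^2}{v_i^2}u\approx-\tfrac{\delta\lambda}{v_i^2}\alpha_i\ell_i$, I would obtain, after dividing by $\delta$, exactly the discrete eigenvalue problem $\frac{v_i^2}{\ell_i}(C^\gamma\bm u)_i=\lambda\,(\bm u)_i$ for all $i$, that is $VC^\gamma\bm u=\lambda\bm u$, whence $\omega_i=\sqrt{\delta\lambda_i}+\BO(\delta)$; substituting the eigenvector potentials back into the affine exterior interpolation and the almost-constant interior solution then yields the reconstruction \eqref{eq:ui}. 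Reality of the $\omega_i$ is not automatic, since $VC^\gamma$ is non-Hermitian for $\gamma\neq0$; instead I would note that $C^\gamma$ is tridiagonal with $C_{i,i+1}^\gamma C_{i+1,i}^\gamma=\frac{\zeta(\gamma_i\ell_i)\zeta(-\gamma_{i+1}\ell_{i+1})}{s_i^2}>0$, so $C^\gamma$, and likewise $V^{1/2}C^\gamma V^{1/2}\sim VC^\gamma$, is diagonally similar to a real symmetric matrix and therefore has real spectrum; positivity of the $\zeta$-weights further gives $\lambda_i\geq0$, so the square roots are real.

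The step I expect to be the main obstacle is the elimination and order-bookkeeping that produces the precise coefficients of $C^\gamma$ with a controlled remainder. The naive leading-order balance is degenerate: integrating the interior equation and using $u''+\gamma_i u'=(u'+\gamma_i u)'$ shows that the boundary term vanishes identically for the homogeneous interior solution, so the eigenvalue emerges only once the $\omega^2$-source and the $\BO(\delta)$ corrections to the interior fluxes are retained \emph{consistently} across all $2N$ interfaces; getting the direction of the exponential asymmetry right at each endpoint, and proving that the accumulated error is uniformly $\BO(\delta)$ rather than merely $o(1)$ (for instance by an implicit-function or Gohberg--Sigal argument applied to the characteristic system), is where the real work lies.
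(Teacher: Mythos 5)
Your proposal is sound and follows essentially the same route as the source: the paper itself states \cref{thm:gauge} without proof, importing it from \cite{ammariMathematicalFoundationsNonHermitian2024}, where the gauge capacitance matrix is derived exactly by the interior-ODE/flux-matching elimination you describe, with the $\zeta(\pm\gamma_i\ell_i)$ weights arising from the exponential interior profile. Your reality argument via the positive off-diagonal products $C^\gamma_{i,i+1}C^\gamma_{i+1,i}>0$ and diagonal similarity to a symmetric Jacobi matrix is precisely the symmetrisation $J^\gamma=R^{-1}C^\gamma R$ that the paper establishes in the lemma immediately following the theorem.
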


\subsection{Gauge capacitance matrix symmetrisation}
From \cref{eq:Cdef}, we can see that the non-reciprocal nature of our system causes the gauge capacitance matrix $C^\gamma$ to no longer be symmetric, as opposed to the Hermitian case. However, in one dimension, we can exploit the tridiagonal nature of $C^\gamma$ to find a similar but symmetric system by factoring out the non-reciprocal decay.

\begin{lemma}
    Define the \emph{symmetrised gauge capacitance matrix} $J^\gamma$ as 
    \begin{gather}
    J^\gamma = \left(\begin{array}{cccccc}
         \bm q^{(1)}& -\bm s^{(1)}\\
         -\bm s^{(1)}& \bm q^{(2)}& -\bm s^{(2)} \\
         &\ddots&\ddots&\ddots \\
         &&-\bm s^{(N-2)}& \bm q^{(N-1)}& -\bm s^{(N-1)}\\
         &&&-\bm s^{(N-1)}& \bm q^{(N)}&
    \end{array}\right) \in \mathbb{R}^{N\times N} \label{eq:symmcapmat}
    \end{gather}
    with diagonal and off-diagonal given by
    \begin{equation}\label{eq:symmcapmatbands}
        \bm q^{(i)} \coloneqq C^\gamma_{i,i}, \quad \bm s^{(i)} \coloneqq \sqrt{\frac{\zeta(\gamma_i\ell_i)\zeta(-\gamma_{i+1}\ell_{i+1})}{s_i^2}}.
    \end{equation}
    Furthermore,  introduce the \emph{transformation matrix} $R = \diag\left(\bm r^{(1)},\dots,\bm r^{(N)}\right)$ with $r_i$ defined recursively as
    \begin{equation}\label{eq:recursiveR}
        \veci{r}{1} = \sqrt{\zeta(-\gamma_1\ell_1)}, \quad \veci{r}{i+1} = \left(\sqrt{\frac{\zeta(-\gamma_{i+1}\ell_{i+1})}{\zeta(\gamma_{i}\ell_{i})}}\right)\veci{r}{i}.
    \end{equation}

    Then $C^\gamma$ is similar to the symmetric matrix $J^\gamma$ via the transform $R$, namely
    \begin{equation}\label{eq:tridiagsimilarity}
        J^\gamma = R^{-1} C^\gamma R .
    \end{equation}
\end{lemma}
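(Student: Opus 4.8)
The plan is to exploit the tridiagonality of $C^\gamma$ together with the elementary fact that conjugating by a diagonal matrix rescales the $(i,j)$ entry by $r_j/r_i$; the whole statement then collapses to choosing the $r_i$ so that the two off-diagonal bands are balanced into a single symmetric band, while the diagonal is left untouched.

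First I would record, for any invertible diagonal $R=\diag(r_1,\dots,r_N)$, the identity
\[
(R^{-1}C^\gamma R)_{ij}=\frac{r_j}{r_i}\,C^\gamma_{ij}.
\]
On the diagonal this gives $(R^{-1}C^\gamma R)_{ii}=C^\gamma_{ii}=\bm q^{(i)}$, so the diagonal band of $J^\gamma$ is reproduced for \emph{any} choice of $R$, and it remains only to treat the two off-diagonal bands. Reading the adjacent entries off from \eqref{eq:Cdef}, I set, for $1\le i\le N-1$,
\[
a_i:=C^\gamma_{i,i+1}=\frac{\zeta(\gamma_i\ell_i)}{s_i},\qquad b_i:=C^\gamma_{i+1,i}=\frac{\zeta(-\gamma_{i+1}\ell_{i+1})}{s_i},
\]
using $\min(i,i+1)=\min(i+1,i)=i$; under conjugation the $(i,i+1)$ entry becomes $\frac{r_{i+1}}{r_i}a_i$ and the $(i+1,i)$ entry becomes $\frac{r_i}{r_{i+1}}b_i$.

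The key step is that symmetry of $R^{-1}C^\gamma R$ is equivalent to equating these two expressions for every $i$, i.e. to $(r_{i+1}/r_i)^2=b_i/a_i$, which (the signs of $a_i,b_i$ cancelling) yields
\[
\frac{r_{i+1}}{r_i}=\sqrt{\frac{\zeta(-\gamma_{i+1}\ell_{i+1})}{\zeta(\gamma_i\ell_i)}}.
\]
This is exactly the recursion \eqref{eq:recursiveR}; the prescribed base value $r_1=\sqrt{\zeta(-\gamma_1\ell_1)}$ fixes only the overall scale of $R$, which does not affect $R^{-1}C^\gamma R$. Substituting this ratio back, the two bands collapse to a common value whose magnitude is $\sqrt{\lvert a_i\rvert\,\lvert b_i\rvert}=\sqrt{\zeta(\gamma_i\ell_i)\zeta(-\gamma_{i+1}\ell_{i+1})}\,/\,s_i=\bm s^{(i)}$; since conjugation by the positive diagonal $R$ preserves the common sign of the off-diagonal band of $C^\gamma$, this reproduces the off-diagonal band of $J^\gamma$ in \eqref{eq:symmcapmatbands}, giving \eqref{eq:tridiagsimilarity}.

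Finally I would dispatch well-posedness: since $\zeta(z)>0$ for all real $z$ by \eqref{def:zeta}, every $r_i$ produced by \eqref{eq:recursiveR} is strictly positive, so $R$ is invertible and the similarity is legitimate. I do not anticipate a genuine obstacle here, as the argument is a direct verification; the only point requiring care is the bookkeeping of the two off-diagonal bands, namely checking that the single scalar ratio $r_{i+1}/r_i$ symmetrises the $(i,i+1)$ and $(i+1,i)$ entries \emph{simultaneously} — which holds precisely because both are governed by that same ratio and the geometric-mean choice balances them.
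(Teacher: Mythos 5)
Your proof is correct and is precisely the ``direct verification'' that the paper invokes without writing out: conjugation by a diagonal matrix fixes the diagonal and rescales the $(i,i+1)$ and $(i+1,i)$ entries by reciprocal ratios $r_{i+1}/r_i$ and $r_i/r_{i+1}$, and the recursion \eqref{eq:recursiveR} is exactly the geometric-mean choice that equalises the two bands at magnitude $\bm s^{(i)}$. The one point you gloss over is the final sign: since $\zeta>0$, the off-diagonal entries of $C^\gamma$ as literally printed in \eqref{eq:Cdef} are positive while $J^\gamma$ carries $-\bm s^{(i)}$, and conjugation by the positive diagonal $R$ preserves signs --- so the identity \eqref{eq:tridiagsimilarity} requires the conventional minus sign on the off-diagonal of the gauge capacitance matrix, which is an inconsistency in the paper's stated definition rather than a flaw in your argument.
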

The proof of this lemma lies in a direct verification of \cref{eq:tridiagsimilarity}.

As a consequence, we immediately have
\begin{equation}\label{eq:sameeigenvalues}
    VJ^\gamma \bm v_i = \lambda_i\bm v_i \iff VC^\gamma \bm u_i = \lambda_i\bm u_i \quad \text{ with }\bm u_i = R\bm v_i,
\end{equation}
where we used \cref{eq:tridiagsimilarity} and the fact that both $R^{-1}$ and $V$ are diagonal matrices and thus commute.
Hence, any eigenpair $(\lambda_i,\bm v_i)$ of $VJ^\gamma$ corresponds to an eigenpair $(\lambda_i,\bm u_i = R\bm v_i)$ of $\mc C^\gamma = VC^\gamma$. 

\begin{corollary}
    The generalised gauge capacitance matrix $\mc C^\gamma = VC^\gamma$ has $N$ real, distinct and non-negative eigenvalues.
\end{corollary}
\begin{proof}
    Due to \cref{eq:sameeigenvalues} we may equivalently investigate the eigenvalues of $VJ^\gamma\sim V^\frac{1}{2}J^\gamma V^{\frac{1}{2}}$. Now, $V^\frac{1}{2}J^\gamma V^{\frac{1}{2}}$ is a real symmetric, tridiagonal matrix and thus has $N$ real, distinct eigenvalues. 

    Finally, from \cref{eq:Cdef} it follows that 
    \[
        (VC^\gamma)_{ii} = \sum_{j\neq i}\abs{(VC^\gamma)_{ij}}
    \]
    which by the Gershgorin circle theorem implies that all eigenvalues of $VC^\gamma$ lie in the non-negative half plane, as desired.
\end{proof}

\begin{remark}
    The transformation matrix $R$ encodes the decay due to non-reciprocity of the eigenmodes of $\mc C^\gamma$. 
    We can explicitly identify this decay as 
    \begin{equation}\label{eq:rdecay}
        \veci{r}{i+1} = \sqrt{\zeta(-\gamma_{i+1}\ell_{i+1})}e^{-\frac{1}{2}\sum_{k=1}^i\gamma_k\ell_k},
    \end{equation}
    by expanding the recursive definition \cref{eq:recursiveR} and using the fact that $\frac{\zeta(-z)}{\zeta(z)} = e^{-z}$. 

    The exponential part of \cref{eq:rdecay} quickly dominates, causing $\veci{r}{i}$ to decay exponentially as $i$ increases. As a consequence, any eigenmode $\bm v_i$ of $V J^\gamma$ yields an eigenmode $\bm u_i = R\bm v_i$ of $\mc C^\gamma$ with decay $\bm u_i^{(j)} \approx \bm v_i^{(j)}e^{-\frac{1}{2}\sum_{k=1}^j\gamma_k\ell_k}$.
\end{remark}

\subsection{Non-reciprocal block disordered systems}
In this work, we are interested in the localisation properties of disordered non-reciprocal systems, where local translation invariance is broken. To that end, we adapt the \emph{block disordered formalism} from \cite{ammari.barandun.ea2025Universal,ammari.barandun.ea2025Subwavelength} to the non-reciprocal setting. Namely, we consider a finite number of distinct \enquote{building blocks} consisting of (possibly multiple) subwavelength resonators. As in the Hermitian setting considered in \cite{ammari.barandun.ea2025Universal,ammari.barandun.ea2025Subwavelength}, constructing disordered resonator arrays from simple building blocks with imaginary gauge potentials will enable us to characterise the ``unusual'' localisation properties of the array by looking at the building blocks.

We recall that a subwavelength block disordered system is a chain of subwavelength resonators consisting of $M$ blocks of resonators $B_{\oo{j}}$ sampled according to a sequence $$\chi \in \{1,\dots , D\}^M \eqqcolon \ldm$$ from a selection $B_1, \dots, B_D$ of $D$ distinct resonator blocks, arranged along a line. Each resonator block $B_j$ is characterised by a sequence of tuples 
\[
(v_1,\ell_1,s_1,\gamma_1), \dots, (v_{\len(B_j)},\ell_{\len(B_j)},s_{\len(B_j)}, \gamma_{\len(B_j)}), 
\] which denotes the wave speed, length, spacing, and non-reciprocal decay of each constituent resonator. Here, $\len(B_j)$ denotes the total number of resonators contained within the block $B_j$.

We will often abuse notation and write $\mc D = \bigcup_{j=1}^M B_{\oo{j}}$ to denote the resonator array constructed by sampling the blocks $B_1, \dots B_D$  according to $\chi\in \ldm$ and then arranging them in a line. Having thus constructed an array of subwavelength resonators, we can alternatively write $\mc D = \bigcup_{i=1}^N D_i$.
Note that since $M$ denotes the total number of sampled blocks and $N$ the total number of resonators, we always have $M\leq N$.

\begin{figure}
    \centering
    \begin{tikzpicture}[scale=1.0, every node/.style={font=\footnotesize}]

    \draw[thick] (0,0) -- (2,0);
    \node[above] at (1,0) {$D_1$};
    \draw[->] (2,-0.2) -- node[below] {$\gamma_1(B_1)$} (0,-0.2);
    
    \draw[
      decorate,
      decoration={brace, mirror, amplitude=5pt}
    ] (0,-0.6) -- (4,-0.6)
      node[midway, yshift=-0.45cm] {$B_1$};
    
    \draw[-, dotted] (0,0) -- (0,0.7);
    \draw[-, dotted] (2,0) -- (2,0.7);
    \draw[|-|, dashed] (0,0.7) -- node[above]{$\ell_1(B_1)$} (2,0.7);
    
    \draw[-, dotted] (4,0) -- (4,0.7);
    \draw[|-|, dashed] (2,0.7) -- node[above]{$s_1(B_1)$} (4,0.7);
    
    \draw[thick] (4,0) -- (5,0);
    \node[above] at (4.5,0) {$D_2$};
    \draw[->] (5,-0.2) -- node[below] {$\gamma_1(B_2)$} (4,-0.2);
    
    \draw[thick] (6,0) -- (7,0);
    \node[above] at (6.5,0) {$D_3$};
    \draw[->] (7,-0.2) -- node[below] {$\gamma_2(B_2)$} (6,-0.2);
    
    \draw[
      decorate,
      decoration={brace, mirror, amplitude=5pt}
    ] (4,-0.6) -- (9,-0.6)
      node[midway, yshift=-0.45cm] {$B_2$};
    
    \draw[-, dotted] (5,0) -- (5,0.7);
    \draw[-, dotted] (6,0) -- (6,0.7);
    \draw[-, dotted] (7,0) -- (7,0.7);
    \draw[-, dotted] (9,0) -- (9,0.7);
    \draw[|-|, dashed] (4,0.7) -- node[above]{$\ell_1(B_2)$} (5,0.7);
    \draw[|-|, dashed] (5,0.7) -- node[above]{$s_1(B_2)$} (6,0.7);
    \draw[|-|, dashed] (6,0.7) -- node[above]{$\ell_2(B_2)$} (7,0.7);
    \draw[|-|, dashed] (7,0.7) -- node[above]{$s_2(B_2)$} (9,0.7);
    
    \begin{scope}[shift={(9,0)}]
        \draw[thick] (0,0) -- (2,0);
        \node[above] at (1,0) {$D_4$};
        \draw[->] (2,-0.2) -- node[below] {$\gamma_1(B_1)$} (0,-0.2);
        
        \draw[
          decorate,
          decoration={brace, mirror, amplitude=5pt}
        ] (0,-0.6) -- (4,-0.6)
          node[midway, yshift=-0.45cm] {$B_1$};
        
        \draw[-, dotted] (0,0) -- (0,0.7);
        \draw[-, dotted] (2,0) -- (2,0.7);
        \draw[|-|, dashed] (0,0.7) -- node[above]{$\ell_1(B_1)$} (2,0.7);
        
        \draw[-, dotted] (4,0) -- (4,0.7);
        \draw[|-|, dashed] (2,0.7) -- node[above]{$s_1(B_1)$} (4,0.7);
    \end{scope}
    
\end{tikzpicture}
    \caption{A non-reciprocal block disordered system consisting of two monomer blocks $B_1$ and a dimer block $B_2$ arranged in a chain given by the sequence $\chi= (1,2,1)$. It thus consists of $M=3$ blocks and $N=4$ resonators $D_1,\dots ,D_4$ in total.}
    \label{fig:disorderedsketch}
\end{figure}
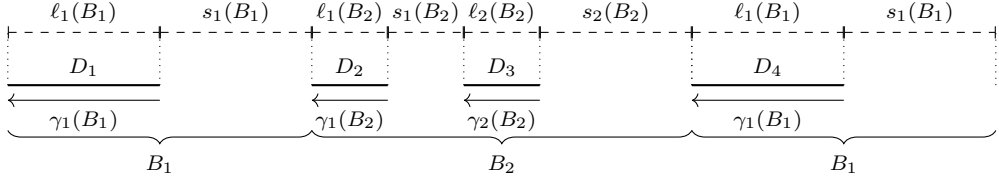
\begin{example}\label{ex:standard_blocks}
    The central example in this work are block disordered systems obtained by sampling from a two block set $B_1$ and $B_2$ ($D=2$), a monomer (\emph{i.e.} single resonator) $B_1=B_{monomer}$ and dimer block $B_2=B_{dimer}$.
    In particular, we construct the blocks as follows: For the monomer block $\len(B_{monomer}) = 1$ and $\ell_1(B_{monomer}) = s_1(B_{monomer})  = 2$ while for the dimer block we have $\len(B_{dimer}) = 2$ and $\ell_1(B_{dimer}) = \ell_2(B_{dimer})=1$ and $s_1(B_{dimer})=1, s_2(B_{dimer})=2$. We choose all wave speeds to be equal to $1$. Furthermore, we choose all imaginary gauge potentials to be equal $\gamma_1(B_{monomer}) = \gamma_1(B_{dimer}) = \gamma_2(B_{dimer}) = \gamma$. This has the effect that $\ell_1(B_{monomer})\gamma_1(B_{monomer}) = \ell_1(B_2)\gamma_1(B_2) +\ell_2(B_{dimer})\gamma_2(B_{dimer})$ thus ensuring that the decay due to non-reciprocity is equal across the monomer and dimer blocks. 
    An example of a single realisation of this system with $M=14$ is described by the following sequence:
    \[
        \chi = (1,1,1,1,2,1,1,1,1,1,2,1,1,1) \in \{1,2\}^M . 
    \]
    In \cref{fig:disorderedsketch}, another realisation corresponding to the sequence $\chi = (1,2,1)$ is illustrated.
\end{example}

\subsection{Transfer and propagation matrices}\label{ssec:transfer and propmat}
We now aim to adapt the powerful propagation matrix approach from \cite{ammari.barandun.ea2025Subwavelength} to the non-reciprocal setting. To that end, we consider an arbitrary block disordered system $\mc D = \bigcup_{j=1}^M B_{\oo{j}} = \bigcup_{i=1}^ND_i$ consisting of blocks $B_1, \dots B_D$ arranged in a sequence $\chi\in \ldm$ and described by the generalised gauge capacitance matrix $\mc C^\gamma = VC^\gamma$.
By repeating the arguments in \cite[Section 4]{ammari.barandun.ea2025Subwavelength} for $\mc C^\gamma$, we find the non-reciprocal propagation matrix governing the subwavelength solutions of \cref{eq:coupled ods}.

\begin{definition}[Non-reciprocal propagation matrices]
    For the $i$\textsuperscript{th} subwavelength resonator with wave speed $v_i$, length $\ell_i$, spacing $s_i$ and decay $\gamma_i$, we define the \emph{propagation matrix} at frequency $\lambda$ to be the $2$-by-$2$ matrix that propagates any solution $u(x)$ of the subwavelength resonance problem \cref{eq:coupled ods} from the left edge of the resonator $x_i^{\iL}$ to the left edge of the following resonator $x_{i+1}^{\iL}$. Namely
    \begin{equation}\label{eq:propmatdef}
    \begin{pmatrix}
        u_-(x^{\iL}_{i+1})\\
        u_-'(x^{\iL}_{i+1})
    \end{pmatrix} = \underbrace{\begin{pmatrix}
        1-s_i\frac{\ell_i}{\zeta(\gamma_i\ell_i)v_i^2}\lambda & e^{-\gamma_i\ell_i}s_i\\
        -\frac{\ell_i}{\zeta(\gamma_i\ell_i)v_i^2}\lambda & e^{-\gamma_i\ell_i}
    \end{pmatrix}}_{\propmat\coloneqq} \begin{pmatrix}
        u_-(x^{\iL}_{i})\\
        u'_-(x^{\iL}_{i})
    \end{pmatrix},
\end{equation}
where $u_-(x^{\iL}_{i}) = \lim_{x\uparrow x^{\iL}_i}u(x)$ and $u'_-(x^{\iL}_{i})= \lim_{x\uparrow x^{\iL}_i}u'(x)$ denote the \emph{exterior} values of the eigenmode $u(x)$. We will denote $P_i(\lambda) = \propmat$.

Due to the non-Hermitian nature of our system, we find that 
$$\det \propmat = e^{-\gamma_i\ell_i} \neq 1,$$
where $\det$ denotes the determinant. 

It will often prove useful to factor out the non-reciprocal decay and define the \emph{symmetrised propagation matrix} 
\begin{equation}
    \symmpropmat \coloneqq e^{\frac{1}{2}\gamma_il_i}\propmat, 
\end{equation} satisfying $\det \symmpropmat = 1$.
\end{definition}

It turns out that, using a suitable change of basis, the \emph{symmetrised propagation matrix} corresponds exactly to the \emph{symmetrised gauge capacitance matrix} $VJ^\gamma$. 
Namely, using \cref{eq:symmcapmat}, we find its \emph{symmetrised transfer matrix} is given by
\begin{equation}
    \Tilde{T}_i(\lambda) \coloneqq \begin{pmatrix}
        \frac{\veci{q}{i}-\ell_iv_i^{-2}\lambda}{\veci{s}{i}} & -\frac{\veci{s}{i-1}}{\veci{s}{i}}\\
        1 & 0
    \end{pmatrix}
\end{equation}
for $1\leq i\leq N$ with $\veci{q}{i}$ and $\veci{s}{i}$ as in \cref{eq:symmcapmatbands}. To define $\Tilde{T}_1(\lambda)$ and $\Tilde{T}_N(\lambda)$, we choose the boundary values corresponding to extending the array periodically, namely $(v_0, l_0, \gamma_0) \coloneqq (v_N, l_N, \gamma_N)$, $(v_{N+1}, l_{N+1}, \gamma_{N+1}) \coloneqq (v_1, l_1, \gamma_1)$ and $s_0 = s_N \coloneqq s_{\len (B_{\oo{M}})} (B_{\oo{M}})$. However, this is merely a matter of convention, as any consistent choice of boundary values ultimately yields the same results.  
$\Tilde{T}_i(\lambda)$ is called the transfer matrix since any eigenvector $\bm v \in \C^N$ satisfying $(VJ^\gamma-\lambda I_N)\bm v= 0$, must satisfy
\begin{equation}
    \Tilde{T}_i(\lambda)\begin{pmatrix}
        \veci{v}{i}\\\veci{v}{i-1}
    \end{pmatrix} = \begin{pmatrix}
        \veci{v}{i+1}\\ \veci{v}{i}
    \end{pmatrix},
\end{equation}
for any $2\leq i\leq N-1$.

We can now define the change of basis 
\begin{equation}
    Q_i \coloneqq \begin{pmatrix}
        \sqrt{\zeta(-\gamma_{i+1}\ell_{i+1})} & 0\\
        \frac{\sqrt{\zeta(-\gamma_{i+1}\ell_{i+1})}}{s_i} & -e^{\frac{\gamma_i\ell_i}{2}}\frac{\sqrt{\zeta(-\gamma_{i}\ell_{i})}}{s_i}
    \end{pmatrix}
\end{equation}
and find 
\begin{equation}\label{eq:propmat-transfermat similarity}
    \Tilde{P}_i(\lambda) = Q_i\Tilde{T}_i(\lambda)Q_{i-1}^{-1}
\end{equation}
for any $1\leq i \leq N$. Note that this change of basis corresponds to a combination of switching from symmetrised coordinates back to regular coordinates and then making the same change of basis from $(\veci{u}{i}, \veci{u}{i-1})^\top$ to $(u_-(x_i^{\iL}), u_-'(x_i^{\iL}))^\top$ as in \cite{ammari.barandun.ea2025Subwavelength}. Here, the superscript $^\top$ denotes the transpose. 

\section{Convergence of spectral distributions}\label{sec:jacobi}
In this section, our aim is to understand the limiting spectral distribution of 
non-reciprocal block disordered systems as the number of blocks $M$ goes to infinity. We shall achieve this by adapting the arguments from \cite[Section 3]{ammari.barandun.ea2025Universal} to the non-reciprocal case. Namely, by studying the limiting spectral distribution of the \emph{symmetrised} gauge capacitance matrix, we are able to transfer many tools from the Hermitian setting to the non-reciprocal one.

\subsection{Preliminary results}
Given an arbitrary symmetric matrix $A\in \R^{N\times N}$, we define the \emph{empirical cumulative density of states} to be the number of eigenvalues $\lambda_i \in \sigma(A)$ of $A$ that do not exceed $\lambda$, that is, 
\begin{equation}
    D(A,\lambda) \coloneqq \frac{1}{N}\abs{\sigma(A) \cap (-\infty,\lambda]}
\end{equation}
where $\abs{S}$ denotes the \emph{cardinality} of a finite set $S$.

The central result in this section will be that for any independent and identical distribution of blocks, $D(\mc C^\gamma,\lambda)$ converges weakly to a non-random limiting distribution as the number of blocks sampled $M$ goes to infinity. 

To establish this result, we study the limiting infinite non-reciprocal block disordered system obtained by sampling blocks according to an infinite sequence $\chi\in \ldz$ which is independent and identically distributed with block probabilities $p_1,\dots,p_D$. We thus obtain an infinite array of blocks $\mc D = \bigcup_{j=-\infty}^\infty B_{\oo{j}} = \bigcup_{i=-\infty}^\infty D_i$, which in turn uniquely determines the bi-infinite sequences of wave speeds $(v_i)_{i=-\infty}^\infty$, resonator lengths $(\ell_i)_{i=-\infty}^\infty$, spacings $(s_i)_{i=-\infty}^\infty$ and imaginary gauge potentials $(\gamma_i)_{i=-\infty}^\infty$.

In this limit, the symmetrised gauge capacitance matrix $J^\gamma$ turns into the \emph{symmetrised Jacobi operator} given by
\begin{align*}
        \mc J^\gamma:\lz &\to \lz\\
        \uu{i} &\mapsto (J\bm u)^{(i)} = -\sss{i-1}\uu{i-1} + \qq{i}\uu{i} - \sss{i}\uu{i+1}
\end{align*} 
with $\sss{i}$ and $\qq{i}$ as in \cref{eq:symmcapmatbands}
and the material matrix $V$ into the \emph{material operator} 
\begin{equation}
    \mc V = \diag \left(\dots, \frac{v_i^2}{\ell_i}, \dots\right) :\lz \to \lz.
\end{equation}

Since we aim to employ the tools developed for Jacobi operators, which must be symmetric, to study $\mc V \mc J^\gamma$ we exploit the fact that $\mc V$ is symmetric positive definite to find $\mc V^\frac{1}{2}$ and study the similar
\[
\mc V^\frac{1}{2} \mc J^\gamma \mc V^\frac{1}{2} \sim \mc V \mc J^\gamma.
\]

We now aim to adapt the arguments in \cite[Section 3]{ammari.barandun.ea2025Universal} to $\mc V^\frac{1}{2} \mc J^\gamma \mc V^\frac{1}{2}$ to obtain ergodicity and spectral convergence results due to metric transitivity.

We need the following definition. 
\begin{definition}[Truncations]
    We define the $K$-truncation $A_K\in \mathbb{R}^{(2K+1)\times (2K+1)}$ of some operator $A:\lz\to \lz$ to be 
    \[
        (A_K)_{i,j} = A_{i,j} \quad  \text{for } \abs{i}, \abs{j} \leq K.
    \]

    For some sequence $\chi \in \ldz$, we further define the $M$-truncation $\chi_M\in \mathbb{L}_D(2M+1)$ to be 
    \[
        \chi_M^{(i)} = \chi^{(i)} \quad \text{for } \abs{i} \leq M.
    \]
\end{definition}

We now find that the empirical cumulative density of states of the $K$-truncations of $\VJV$ converge to a non-random limit as $K\to \infty$.
\begin{theorem}\label{thm:ergodic_convergence}
    Let $\chi\in \ldz$ be a random \emph{i.i.d.} sequence of blocks sampled, respectively, with probability $p_1, \dots p_D$, and let $\VJV$ be the associated Jacobi operator on $\lz$.  Then, there exists a non-random positive and continuous measure $D(\VJV,d\lambda)$ such that, with probability $1$,
    \[
        D((\VJV)_K,d\lambda) \to  D(\VJV,d\lambda),
    \]
    weakly as $K\to \infty$.
\end{theorem}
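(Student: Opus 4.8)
The plan is to exploit the self-adjointness of $\VJV$ secured by the symmetrisation and reduce the statement to the standard ergodic theory of one-dimensional Jacobi operators, adapting \cite[Section 3]{ammari.barandun.ea2025Universal}. The first task is to set up the probability space. The i.i.d.\ block sequence $\chi\in\ldz$ generates, through the block structure, the bi-infinite coefficient sequences $(\qq{i})_i$, $(\sss{i})_i$ and $(v_i^2/\ell_i)_i$ that define $\VJV$. Because there are only $D$ distinct block types and $\zeta$ is continuous and strictly positive, these coefficients take values in a fixed finite set, are in particular uniformly bounded, and satisfy $\sss{i}>0$; hence $\VJV$ is a bounded self-adjoint operator whose spectrum lies in a fixed compact interval $I$ independent of $\chi$. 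The delicate point is to realise the resonator-indexed coefficient process as a \emph{stationary ergodic} sequence under the resonator shift $i\mapsto i+1$, even though advancing by one block moves the resonator index by the variable amount $\len(B_{\oo{j}})$. I would handle this exactly as in \cite[Section 3]{ammari.barandun.ea2025Universal}: the i.i.d.\ block shift is a Bernoulli (hence ergodic, indeed mixing) system, and the induced resonator-level environment --- the size-biased stationary renewal process --- is stationary and ergodic. Establishing this metric transitivity is the main obstacle to make fully rigorous.

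Granting stationarity and ergodicity, the convergence becomes a trace/Birkhoff computation. For $f\in C(I)$, the integral against the empirical measure is the normalised trace,
\[
    \int f\,dD\big((\VJV)_K,\lambda\big) = \frac{1}{2K+1}\,\tr f\big((\VJV)_K\big).
\]
I would then compare $\tr f((\VJV)_K)$ with $\sum_{\abs{i}\leq K}\ip{e_i}{f(\VJV)e_i}$. For a polynomial $f$ of degree $n$, $f(\VJV)$ has bandwidth $n$, so $f((\VJV)_K)$ and the corresponding central block of $f(\VJV)$ differ only in the $\BO(n)$ diagonal entries lying within distance $n$ of the truncation edges $\pm K$; the two normalised traces therefore differ by $\BO(n/K)\to 0$, and a Weierstrass approximation on the compact set $I$ extends this to all $f\in C(I)$.

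The remaining average $\tfrac{1}{2K+1}\sum_{\abs{i}\leq K}\ip{e_i}{f(\VJV)e_i}$ is a Birkhoff average of the stationary ergodic sequence $g_i\coloneqq\ip{e_i}{f(\VJV)e_i}$; by the pointwise ergodic theorem it converges, with probability one, to the deterministic constant $\mathbb{E}\big[\ip{e_0}{f(\VJV)e_0}\big]$. This establishes almost sure convergence of $\int f\,dD((\VJV)_K,\cdot)$ to a limit $\Lambda(f)$ for every $f\in C(I)$.

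Finally I would identify the limit. The functional $\Lambda(f)\coloneqq\mathbb{E}\big[\ip{e_0}{f(\VJV)e_0}\big]$ is linear, normalised ($\Lambda(1)=1$), and positive, since $f\geq 0$ gives $f(\VJV)\geq 0$ and hence nonnegative diagonal entries; by the Riesz representation theorem it is represented by a nonrandom probability measure $D(\VJV,d\lambda)$, which is therefore positive, and the preceding steps give the claimed almost sure weak convergence. Continuity of this limiting measure (absence of atoms) is the one point requiring a genuinely one-dimensional argument: because the off-diagonals $\sss{i}$ never vanish, the two-term recursion forces every eigenspace of $\VJV$ to be at most one-dimensional, so by ergodicity no single energy can carry positive mass of the density of states; invoking the classical continuity of the integrated density of states for one-dimensional ergodic Jacobi operators with non-vanishing off-diagonals then yields $D(\VJV,\{\lambda_0\})=0$ for every $\lambda_0$, so the limiting distribution is continuous.
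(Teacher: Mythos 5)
Your proposal is correct and takes essentially the same approach as the paper, whose own proof consists of observing that the coefficient sequences defining $\VJV$ form a finite-state homogeneous Markov chain (hence a metrically transitive, stationary ergodic process) and then deferring the Birkhoff/trace/Riesz machinery to Theorem 3.9 of \cite{ammari.barandun.ea2025Universal} --- the machinery you reconstruct explicitly. The only cosmetic differences are that you justify ergodicity of the resonator-indexed environment via the induced size-biased renewal system rather than the paper's Markov-chain description, and that eigenspaces of the whole-line operator are at most \emph{two}-dimensional rather than one-dimensional, which still suffices for your continuity argument.
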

\begin{proof}
    This result closely mirrors \cite[Theorem 3.9]{ammari.barandun.ea2025Universal}. It is straightforward to verify that the entries $\VJV$ are again given by homogeneous Markov chains with finite state space, ensuring the metric transitivity. 
\end{proof}

\subsection{Convergence result in the non-reciprocal setting}
We can now employ the fact that the $K$-truncations of $\VJV$ corresponds to finite block disordered systems obtained by $M$-truncating the block sequence $\chi \in \ldz$ (up to some negligible edge effects), to find that the spectral distributions of these finite block disordered systems also converge to the deterministic limit $D(\VJV,d\lambda)$.
\begin{corollary}\label{cor:spectrapconv}
    Let $\chi\in \ldz$ as in \cref{thm:ergodic_convergence}, $\chi_M$ its $M$-truncation and $\mc C^\gamma (\chi_M)$ the generalised gauge capacitance matrix of the associated finite block disordered system consisting of $M$ blocks. Then, with probability $1$,
    \[
        D(\mc C^\gamma (\chi_M),d\lambda) \to  D(\VJV,d\lambda),
    \]
    weakly as $M \to \infty$, for $D(\VJV,d\lambda)$ as in \cref{thm:ergodic_convergence}.
\end{corollary}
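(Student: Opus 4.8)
The plan is to bridge the gap between the $K$-truncations $(\VJV)_K$, whose empirical spectral distributions converge by \cref{thm:ergodic_convergence}, and the finite capacitance matrices $\mc C^\gamma(\chi_M)$ actually associated to the sampled systems. The key observation is that these two families of matrices are very nearly the same object: a finite block disordered system built from the $M$-truncated sequence $\chi_M$ has a symmetrised generalised gauge capacitance matrix $VJ^\gamma$ that agrees with the truncation $(\VJV)_K$ of the infinite Jacobi operator, except in finitely many entries near the two edges. Since $\mc C^\gamma(\chi_M) = VC^\gamma$ is similar to $VJ^\gamma$ via the transformation $R$ of the symmetrisation lemma (and similarity preserves the spectrum), it suffices to control the difference between the empirical spectral distributions of $VJ^\gamma$ and of the corresponding $(\VJV)_K$.

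First I would make the index bookkeeping precise. For a sampled sequence $\chi_M \in \mathbb{L}_D(2M+1)$, the total number of resonators $N$ is determined by the block lengths, and the correspondence $M \to \infty$ forces $N \to \infty$ as well (with $N \leq \sum_j \len(B_j)\cdot(2M+1)$ bounded linearly in $M$). The matrix $VJ^\gamma$ for this finite system and the operator-truncation $(\VJV)_K$ with $K$ chosen so that $2K+1 = N$ have identical entries in the interior, differing only because (i) the infinite operator uses neighbouring blocks to define the off-diagonal and diagonal terms at the two ends, whereas the finite system uses the periodic-extension boundary convention fixed in \cref{ssec:transfer and propmat}, and (ii) the symmetrised bands $\qq{i}, \sss{i}$ depend only on the local block data, which coincide away from the edges. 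Thus $VJ^\gamma$ and $(\VJV)_K$ differ by a perturbation of rank $O(1)$, uniformly bounded in $M$.

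The main technical step is then a rank-perturbation (interlacing) argument. Because $VJ^\gamma - (\VJV)_K$ has rank bounded by a constant $c$ independent of $M$, Weyl's interlacing inequalities for symmetric matrices give
\[
    \bigl| D(VJ^\gamma, \lambda) - D((\VJV)_K, \lambda) \bigr| \leq \frac{c}{N}
\]
for every $\lambda \in \R$. As $M \to \infty$ we have $N \to \infty$, so this bound tends to $0$ uniformly in $\lambda$, meaning the two empirical cumulative distributions converge to the same weak limit whenever either does. Combining this with \cref{thm:ergodic_convergence}, which supplies the almost-sure weak convergence of $D((\VJV)_K, d\lambda)$ to $D(\VJV, d\lambda)$, and with the spectral identity $\sigma(\mc C^\gamma(\chi_M)) = \sigma(VJ^\gamma)$ coming from the similarity $J^\gamma = R^{-1}C^\gamma R$, yields the claimed almost-sure weak convergence $D(\mc C^\gamma(\chi_M), d\lambda) \to D(\VJV, d\lambda)$.

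The step I expect to be the main obstacle is making the identification between $(\VJV)_K$ and the finite-system matrix fully rigorous: one must verify that the edge discrepancies really are confined to a uniformly bounded number of rows and columns (so that the rank stays $O(1)$ rather than growing with the block sizes), and that the periodic boundary convention introduces no spurious eigenvalues that escape the interlacing control. Everything else is either a direct consequence of the similarity transform already established or a standard consequence of Weyl's inequality; the genuine content is the careful comparison of the two matrix families at their boundaries.
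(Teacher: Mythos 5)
Your argument is essentially the paper's own proof: identify the finite symmetrised system with the truncation $(\VJV)_K$ up to $O(1)$ entries at the edges, show these are negligible for the empirical spectral distribution as $M,K\to\infty$, and transfer back to $\mc C^\gamma(\chi_M)$ via the similarity chain $V^{\frac{1}{2}}J^\gamma V^{\frac{1}{2}}\sim VJ^\gamma\sim VC^\gamma$; your explicit rank-$O(1)$ Weyl-interlacing bound of $c/N$ is precisely the content the paper delegates to the cited reference. The only small correction is that the interlacing argument must be applied to the symmetric matrix $V^{\frac{1}{2}}J^\gamma V^{\frac{1}{2}}$ (as the paper does) rather than to $VJ^\gamma$, which is similar to it but not itself symmetric.
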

\begin{proof}
    For a given $\chi\in \ldz$ any $M$-truncation determines a finite block disordered system consisting of $N$ individual resonators. After potentially relabelling these resonators, we can find a $K>0$ such that 
    \[
        V^\frac{1}{2}J^\gamma V^\frac{1}{2} \approx (\VJV)_K,
    \] where $V$ and $J^\gamma$ are the material matrix and the symmetrised gauge capacitance matrix determined by the $M$-truncation $\chi_M$ and $(\VJV)_K$ is the $K$-truncation of the operator $\VJV$ on $\lz$. By definition, we have $K\to \infty$ as $M\to \infty$, and the approximate equality is to be understood as an equality up to a fixed number of entries at the edge. However, as in \cite{ammari.barandun.ea2025Universal}, these edge effects become negligible as $M,K\to \infty$ ensuring 
    \[
        D(V^\frac{1}{2}J^\gamma V^\frac{1}{2},d\lambda) \to  D(\VJV,d\lambda).
    \]

    The result now follows from the similarities 
    \[
        V^\frac{1}{2}J^\gamma V^\frac{1}{2} \sim VJ^\gamma \sim VC^\gamma = \mc C^\gamma (\chi_M).
    \]
\end{proof}

\section{Density of states and spectral regions}\label{sec:dos}
In this section, we aim to more precisely describe the spectral distribution of non-reciprocal block disordered systems from the properties of their constituent blocks. To that end, we employ the propagation matrix formalism as well as results from Toeplitz theory. 

\subsection{Propagation matrices and bandgaps}
Analogously to \cite{ammari.barandun.ea2025Subwavelength}, our aim in this subsection is to characterise the spectral gaps of $\capmatg$ in terms of the spectral gaps of its constituent blocks. In order to apply the tools developed in the Hermitian case, we study the spectrum of the similar matrix $VJ^\gamma$. 

Recall that in \cref{ssec:transfer and propmat} we defined the symmetrised transfer matrix $\Tilde{T}_j(\lambda)$ taking any solution of $(VJ^\gamma -\lambda I_N)\bm v=0$ forward by one resonator. Using the \emph{total transfer matrix} $\Tilde{T}_{tot} = \prod_{i=1}^N \Tilde{T}_i$, we can characterise the spectrum of $VJ^\gamma$ as follows:
\begin{equation}\label{eq:transfermat_spectrumcharact}
    \lambda \in \sigma(VJ^\gamma) \iff \Tilde{T}_{tot}(\lambda)\underbrace{\begin{pmatrix}
        1 \\ \sqrt{\frac{\zeta(-\gamma_1\ell_1)}{\zeta(\gamma_0\ell_0)}}
    \end{pmatrix}}_{(\veci{v}{1}, \veci{v}{0})^\top} = \underbrace{\nu\begin{pmatrix}
        \sqrt{\frac{\zeta(\gamma_N\ell_N)}{\zeta(-\gamma_{N+1}\ell_{N+1})}}\\
        1
    \end{pmatrix}}_{(\veci{v}{N+1}, \veci{v}{N})^\top}.
\end{equation}
Here, the exact forms of $(\veci{v}{1}, \veci{v}{0})^\top$ and $(\veci{v}{N+1}, \veci{v}{N})^\top$ follow from the first and last row of $(VJ^\gamma -\lambda I_N)\bm v=0$. 

We now aim to restate \cref{eq:transfermat_spectrumcharact} in terms of symmetrised propagation matrices. To do this, we define the \emph{total propagation matrix} $\Tilde{P}_{tot} = \prod_{i=1}^N \Tilde{P}_i$. Using \cref{eq:propmat-transfermat similarity}, we find that $\Tilde{T}_i = Q_i^{-1}\Tilde{P}_iQ_{i-1}$ as well as 
\[
    \Tilde{T}_{tot} = Q_N^{-1}\Tilde{P}_{tot}Q_0.
\]
This allows us to recharacterise \cref{eq:transfermat_spectrumcharact} as 
\begin{equation}\label{eq:propmat_spectrumcharact}
    \lambda \in \sigma(VJ^\gamma) \iff \Tilde{P}_{tot}(\lambda)\begin{pmatrix}
        1 \\ 0
    \end{pmatrix} = \nu\begin{pmatrix}
        1\\
        0
    \end{pmatrix}.
\end{equation} 

We have thus reduced the question of whether $\lambda$ lies in the spectrum of $VJ^\gamma$ and thus $\capmatg$ to a question about the dynamics of the propagation matrix $\Tilde{P}_{tot}(\lambda)$ stemming from the Hermitian system $VJ^\gamma$ and thus having $\det \Tilde{P}_{tot} = \prod_{i=1}^N\det \Tilde{P}_i(\lambda) = 1$. Therefore, it is natural to again employ the tools developed for the Hermitian setting in \cite[Section 4]{ammari.barandun.ea2025Subwavelength}. 

Crucially, because, as in the Hermitian case, the symmetrised propagation matrix $\Tilde{P}_i(\lambda)$ only depends on the properties of the $i$\textsuperscript{th} resonator, we can define the \emph{symmetrised block propagation} $\Tilde{P}_{B_d}(\lambda)$ matrices as follows: 
\begin{equation}\label{eq:symmblockprop}
    \Tilde{P}_{B_d}(\lambda) \coloneqq \prod_{k=1}^{\len(B_d)}\Tilde{P}_{v_k(B_d), \ell_k(B_d), s_k(B_d),\gamma_k(B_d)}(\lambda).
\end{equation}
This allows us to decompose the total propagation matrix $\Tilde{P}_{tot}$ as the product of block propagation matrices:
\begin{equation*}
    \Tilde{P}_{tot}(\lambda) = \prod_{j=1}^M\Tilde{P}_{B_{\oo{j}}}(\lambda).
\end{equation*}

Analogously to the Hermitian case, we say that the frequency $\lambda\in \R$ lies in the \emph{pass band} of block $B_d$ if we have $\abs{\tr \Tilde{P}_{B_d}(\lambda)}\leq2$ and in the \emph{bandgap} if $\abs{\tr \Tilde{P}_{B_d}(\lambda)}>2$. Here, $\tr$ denotes the trace.
Furthermore, since every $\Tilde{P}_{B_d}(\lambda)\in \operatorname{SL}(2,\R)$ we can identify the block propagation matrices with Möbius transformations on the real projective space $\R\mathbb{P}^1$, which in turn is diffeomorphic to the sphere $\R\mathbb{P}^1\simeq \mathbb{S}^1$. In this language, $\lambda$ lying in the bandgap of $\Tilde{P}_{B_d}(\lambda)$ is equivalent to the corresponding Möbius transformation being hyperbolic. Such a transformation then has two distinct fixed points, one source and one sink which we shall denote by $s(\Tilde{P}_{B_d}(\lambda))\in \R\mathbb{P}^1$ and $u(\Tilde{P}_{B_d}(\lambda)) \in \R\mathbb{P}^1$ respectively\footnote{Namely, let $(\xi_1, E_1), (\xi_2, E_2)$ be the eigenvalues and -spaces of $\Tilde{P}_{B_d}(\lambda)$ such that $\abs{\xi_1}<1<\abs{\xi_2}$. After identifying these dimension $1$ eigenspaces with points in $\mathbb{P}\R^1$ we find that $s(\Tilde{P}_{B_d}(\lambda))=E_1$ and $u(\Tilde{P}_{B_d}(\lambda))=E_2$.}.

In this formalism, Theorem $4.1$ in \cite{ammari.barandun.ea2025Subwavelength} holds for $\VJV$ and $\Tilde{P}_{B_{\oo{j}}}(\lambda)$. Therefore, we obtain the following result of Saxon--Hutner type for the non-reciprocal setting.

\begin{figure}
    \centering
    \includegraphics[width=0.95\textwidth]{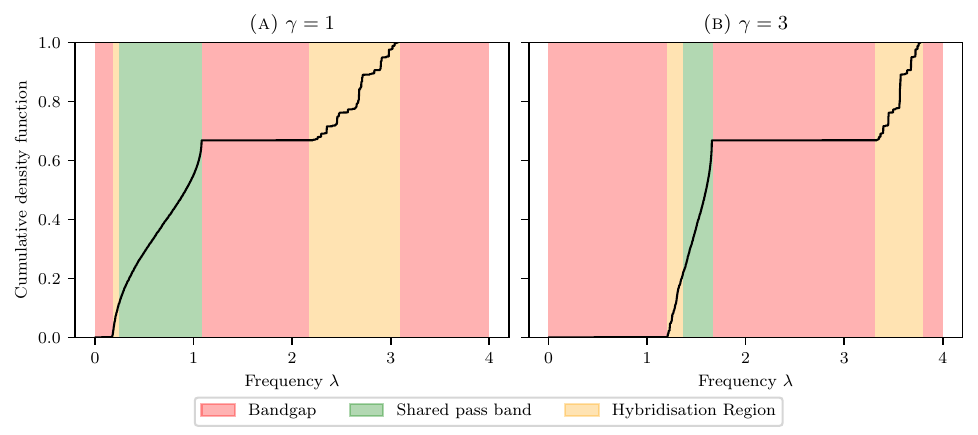}
    \caption{Empirical cumulative density of states for two block disordered systems ($M=1000$, standard blocks as in \cref{ex:standard_blocks}) with $\gamma=1$ and $\gamma=3$, overlaid over the different spectral regions. We can see that the density of states arranges smoothly in the shared pass band, vanishes in the bandgap and behaves fractal like in the hybridisation regions. As $\gamma$ increases, the pass bands of the constituent blocks contract.}
    \label{fig:spectralregions}
\end{figure}
\begin{theorem}\label{thm:saxonhutner}
    Consider an infinite non-reciprocal block disordered system with blocks $B_1, \dots, B_D$, sequence $\chi \in \ldz$ and corresponding symmetrised Jacobi operator $\VJV$. 
    
    Let $\lambda\in \R$ be a frequency lying in the bandgap of all constituent blocks, i.e.
    \[
        \abs{\tr P_{B_d}(\lambda)} > 2 \quad \text{ for all }d=1,\dots, D,
    \]
    and assume further that all sink fixed points $u(\Tilde{P}_{B_d}(\lambda)),\; d=1,\dots,D$ lie in the same connected component of $\R\mathbb{P}^1\setminus\{s(\Tilde{P}_{B_1}(\lambda)), \dots, s(\Tilde{P}_{B_D}(\lambda))\}$\footnote{The monomer and dimers block in \cref{ex:standard_blocks} are specifically chosen such that this source-sink condition is satisfied for all $\lambda$ in the shared bandgap.}.
    
    Then $\lambda$ must also lie in the bandgap of the entire system, \emph{i.e.} $\lambda\notin \sigma(\VJV)$.
\end{theorem}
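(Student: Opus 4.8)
The plan is to recast the statement as a problem about the joint dynamics on $\R\mathbb{P}^1\simeq \mathbb{S}^1$ of the hyperbolic Moebius transformations induced by the block propagation matrices, and to manufacture a single invariant contracting arc out of the source--sink hypothesis. First I would record that each $\Tilde{P}_{B_d}(\lambda)\in \operatorname{SL}(2,\R)$, so that $\det \Tilde{P}_{B_d}(\lambda)=1$ forces its eigenvalues to satisfy $\xi_1\xi_2=1$; since $\abs{\tr \Tilde{P}_{B_d}(\lambda)}>2$ the characteristic polynomial $t^2-(\tr \Tilde{P}_{B_d}(\lambda))t+1$ has positive discriminant and two distinct real roots with $\abs{\xi_1}<1<\abs{\xi_2}$, i.e.\ $\Tilde{P}_{B_d}(\lambda)$ is hyperbolic with repelling source $s(\Tilde{P}_{B_d}(\lambda))=E_1$ and attracting sink $u(\Tilde{P}_{B_d}(\lambda))=E_2$. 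In the projective coordinate sending $u(\Tilde{P}_{B_d}(\lambda))\mapsto 0$ and $s(\Tilde{P}_{B_d}(\lambda))\mapsto\infty$ the transformation becomes the linear map $w\mapsto (\xi_1/\xi_2)\,w$, and because $\xi_1/\xi_2=1/\xi_2^2>0$ the multiplier is a \emph{positive} number of modulus strictly less than $1$ — this positivity, a consequence of $\det=1$, is what makes the following contraction argument order-preserving and clean.

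The geometric heart is then to build one arc contracted by every block. By hypothesis all sinks $u(\Tilde{P}_{B_d}(\lambda))$ lie in a single connected component $I$ of $\R\mathbb{P}^1\setminus\{s(\Tilde{P}_{B_1}(\lambda)),\dots,s(\Tilde{P}_{B_D}(\lambda))\}$, which is an open arc containing no source. As the sinks are finite in number, I would choose an open arc $J$ with $u(\Tilde{P}_{B_d}(\lambda))\in J$ for every $d$ and $\overline{J}\subset I$, so that $\overline{J}$ contains no source at all. For each $d$, working in the linear model with $s(\Tilde{P}_{B_d}(\lambda))=\infty\notin\overline{J}$ and $u(\Tilde{P}_{B_d}(\lambda))=0\in \operatorname{int}(J)$, the arc $\overline{J}$ is a compact interval $[a,b]$ with $a<0<b$, and its image $(\xi_1/\xi_2)[a,b]=[ca,cb]$ with $0<c<1$ satisfies $[ca,cb]\subset(a,b)$; hence $\Tilde{P}_{B_d}(\lambda)(\overline{J})\subset J$ for all $d$. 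Passing to the complementary arc $J'\coloneqq \R\mathbb{P}^1\setminus\overline{J}$, which contains every source while its closure avoids every sink, the identical computation applied to the inverses (whose sources and sinks are interchanged) yields $\Tilde{P}_{B_d}(\lambda)^{-1}(\overline{J'})\subset J'$ for all $d$.

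Because there are only finitely many block types, these contractions are uniform, so along any sequence $\chi\in\ldz$ every finite product $\prod_j \Tilde{P}_{B_{\oo{j}}}(\lambda)$ maps $\overline{J}$ strictly inside $J$ while its inverse maps $\overline{J'}$ strictly inside $J'$. The nested intersections of these iterated arcs supply stable and unstable sub-bundles with exponential rates, i.e.\ an exponential dichotomy for the transfer cocycle at $\lambda$. For a symmetric (self-adjoint, bounded) Jacobi operator such a dichotomy of the transfer matrices is equivalent to $\lambda$ lying in the resolvent set, which gives $\lambda\notin\sigma(\VJV)$; this is exactly the content of \cite[Theorem~4.1]{ammari.barandun.ea2025Subwavelength}, whose proof applies once the common contracting arc is secured, since $\VJV$ is symmetric and $\det \Tilde{P}_{B_d}(\lambda)=1$. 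The similarities $\VJV\sim VJ^\gamma\sim \capmatg$ then transfer the conclusion to the generalised gauge capacitance matrix. I expect the main obstacle to be the arc construction itself: guaranteeing that a \emph{single} arc is simultaneously contracted by all $D$ blocks with a uniform factor, and controlling the degenerate configurations where sources coincide or a sink approaches a source; once this shared invariant arc is in hand, the passage to uniform hyperbolicity and then to the spectral statement is the standard dichotomy argument carried over from the Hermitian setting.
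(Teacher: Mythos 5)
Your proposal is correct and follows essentially the same route as the paper, which gives no independent argument beyond observing that the symmetrisation ($\det \Tilde{P}_{B_d}(\lambda)=1$, $\VJV$ symmetric) makes \cite[Theorem~4.1]{ammari.barandun.ea2025Subwavelength} apply verbatim. Your invariant-arc construction and the passage to uniform hyperbolicity and the resolvent set is precisely a (correct) reconstruction of the proof of that cited theorem.
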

This result is illustrated in \cref{fig:spectralregions}. We can see that, even for finite systems $\chi\in \ldm$, in the shared bandgaps, the slope of the empirical cumulative density of states $D(\mc C^\gamma, \lambda)$ is in fact zero. 
This observation is justified by the fact that \cref{cor:spectrapconv} ensures, at least in terms of density, the convergence of the finite spectra to the infinite limit. Indeed, this result is made even more precise in \cite{UniformHyperbolicityInPrep}, where the inclusion of the finite spectra into their infinite limit is proven to be sharp, up to some clearly characterisable edge modes.

Furthermore, as in \cite{ammari.barandun.ea2025Universal}, we again observe a tripartite decomposition of the spectral regions. Namely, we say that a frequency $\lambda$ lies in the
\begin{description}
    \item[Shared pass band] if and only if it is in the pass band of all constituent blocks;
    \item[Bandgap] if and only if it is in the bandgap of all constituent blocks;
    \item[Hybridisation region] otherwise.
\end{description}

As in the Hermitian setting, one can observe that while the eigenvalues arrange into smooth bands in the shared pass band, the behaviour in the hybridisation region is much more fractal. In the following section, we will also see that due to the non-reciprocal decay introduced by the imaginary gauge potentials, there is an intricate interplay between skin-like modes localised at the edge and modes localised in the bulk due to disorder.

\section{Competing mechanisms of localisation}\label{sec:competition}
As the systems under consideration are both non-reciprocal and disordered, we observe two distinct mechanisms of localisation. Namely, on the one hand, the non-reciprocity pushes the eigenmodes to be localised at the edge as a result of the imaginary gauge potentials and on the other hand, the disorder causes the eigenmodes to become localised in the bulk. In contrast to the disordered Hermitian or periodic non-reciprocal setting, our systems exhibit both mechanisms simultaneously, leading to an intricate competition between edge and bulk localisation.

\subsection{Competition between localisations at the edge and in the bulk}
To understand this competition between the two different localisations, we investigate the Lyapunov exponent, which, as shown in \cite{ammari.barandun.ea2025Subwavelength}, allows us to predict the decay of an eigenmode associated with a given frequency.  We define the Lyapunov exponent $L$ at the frequency $\lambda$ as
\begin{equation}
    L(\lambda)  \coloneqq \frac{1}{N}\ln \norm{P_{tot}(\lambda)},
\end{equation}
where $\norm{\cdot}$ denotes the \emph{spectral norm} and posit the following characterisation of localisation:
\begin{equation}\label{eq:lyapunovcharact}
    \left\{
    \begin{aligned}
    \text{bulk localisation} \quad & \text{ if }L(\lambda)>0;\\
    \text{delocalised mode} \quad& \text{ if }L(\lambda)=0;\\
    \text{edge localisation} \quad & \text{ if }L(\lambda)<0.
    \end{aligned}
    \right.
\end{equation}
The first two cases mirror the spectral characterisation in the Hermitian setting \cite{ammari.barandun.ea2025Subwavelength}, while the last case $L(\lambda)<0$ can only occur in the non-reciprocal setting where $\det P_{tot}(\lambda)<1$.

Using the symmetrised propagation matrix $\Tilde{P}_{tot}$, the competition between the edge and the bulk localisations becomes apparent.
In fact, from $P_{l_i,s_i,\gamma_i}= e^{-\frac{1}{2}\gamma_il_i}\Tilde{P}_{l_i,s_i,\gamma_i}(\lambda)$, we immediately find that
\begin{equation}
    L(\lambda)  = \underbrace{\frac{1}{N}\ln \norm{\Tilde{P}_{tot}(\lambda)}}_{>0} \underbrace{- \frac{1}{2N}\sum_{i=1}^N\gamma_i\ell_i}_{<0}. 
\end{equation}

Clearly, there is a competition between the positive Lyapunov exponent of $\Tilde{P}_{tot}(\lambda)$, indicating bulk localisation, and the frequency-independent non-reciprocal decay $-\frac{1}{2N}\sum_{i=1}^{N}\gamma_i\ell_i$, indicating localisation at the edge. 

Note that the Lyapunov exponent of the symmetrised propagation matrix $\Tilde{P}_{tot}(\lambda)$ behaves analogously to the Hermitian case. That is, it characterises the spectral regions as follows: 
\begin{description}
    \item[Shared pass band] very small but positive Lyapunov exponent due to the localisation inherent to disordered one-dimensional systems;
    \item[Bandgap] positive Lyapunov exponent, well approximated by the weighted average of the constituent block Lyapunov exponents;
    \item[Hybridisation region] positive Lyapunov exponent, though systematically lower than the weighted average of the constituent block Lyapunov exponents due to the fact that eigenmodes hybridise.
\end{description}

\begin{figure}
    \centering
    \includegraphics[width=0.95\textwidth]{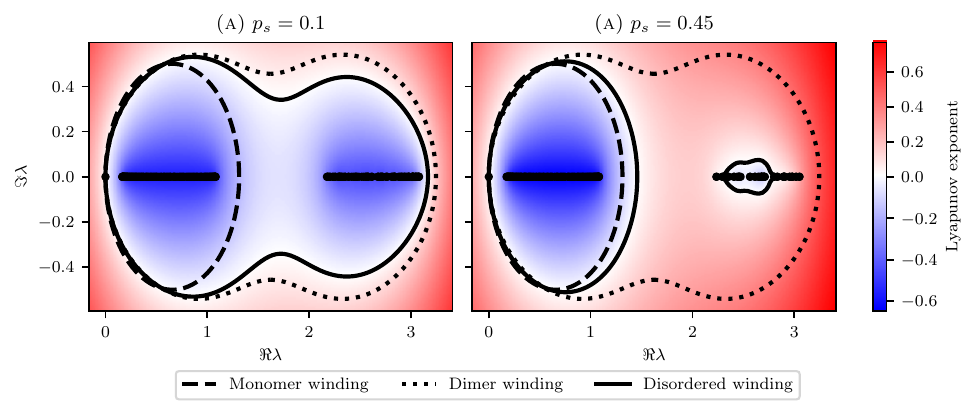}
    \caption{Spectrum, block symbols and total Lyapunov exponents for block disordered systems ($M=100, \gamma=1$, standard blocks) with, respectively, low and high monomer density. We can see that the total Lyapunov exponent exhibits both positive (band localisation) and negative (non-Hermitian skin effect) regions, and the edge contour (\emph{i.e.}, $L(\gamma)=0$, drawn in black) lies between the monomer and dimer block symbols.}
    \label{fig:disordered-winding}
\end{figure}
Since there are no eigenmodes in the bandgap and the bulk localisation in the shared pass band is very weak, the competition between bulk localisation and edge localisation is clearest in the hybridisation regions. In \cref{fig:disordered-winding}, we plot the Lyapunov exponent $L(\lambda)$ for two disordered block systems consisting of monomer and dimer blocks as in \cref{ex:standard_blocks} with $\gamma=1$ for both blocks. As can be seen in \cref{fig:spectralregions}(\textsc{a}), the constructed disordered systems exhibit a hybridisation region around $[2.2,3.1]$, lying in the bandgap of the monomer block and the pass band of the dimer block. 

Because both blocks have the same $\gamma$ and the same total resonator length, the non-reciprocal decay $-\frac{1}{2N}\sum_{i=1}^{N}\gamma_i\ell_i$ remains unchanged as the monomer block probability $p_m$ increases. However, this is not the case for the bulk localisation. In fact, as the monomer block probability $p_m$ increases, any eigenmode in the hybridisation region must pass through more monomer blocks. Because these eigenmodes lie in the bandgap of these blocks, this in turn causes them to become more strongly localised in the bulk. We refer to \cite[Section 4]{ammari.barandun.ea2025Subwavelength} for a more detailed discussion of this type of bandgap localisation.

\begin{figure}
    \centering
    \includegraphics[width=0.95\textwidth]{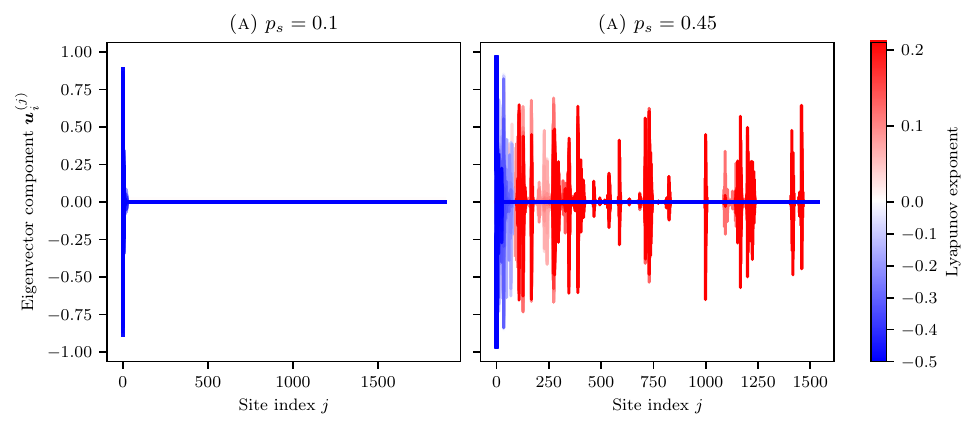}
    \caption{Eigenmodes of the block disordered systems in \cref{fig:disordered-winding}, coloured according to their respective Lyapunov exponents. Negative Lyapunov exponent corresponds to eigenvectors localised at the edge while a positive Lyapunov exponent corresponds to eigenvectors localised in the bulk.}
    \label{fig:eve_competition}
\end{figure}

Consequently, as the monomer block probability $p_m$ increases, the bulk localisation strength increases compared to the edge localisation. This can be observed in \cref{fig:disordered-winding}. For low monomer block probability $p_m=0.1$, all eigenmodes in the hybridisation region have negative Lyapunov exponents, while for higher monomer block probability $p_m=0.45$, we indeed observe some eigenmodes in the hybridisation region with a positive Lyapunov exponent.

This effect is further illustrated in \cref{fig:eve_competition} where for the same systems we plot the eigenmodes, coloured according to their Lyapunov exponents. We can see that the eigenmodes localise as predicted by \cref{eq:lyapunovcharact} and that for low $p_m=0.1$ all eigenmodes are localised at the edge, while there are several eigenmodes localised in the bulk for higher $p_m=0.45$.

This behaviour can alternatively be understood in terms of Toeplitz winding regions. In \cref{fig:disordered-winding}, we determine the Toeplitz winding regions for both the monomer block and the dimer block by imposing quasiperiodic boundary conditions and tracing the resulting bands. Following the theory of Toeplitz matrices (see, for instance, \cite{ammariMathematicalFoundationsNonHermitian2024, ammari.barandun.ea2024Spectra, ammari.barandun.ea2025Generalized}), the eigenmodes within these winding regions must be exponentially localised at the edge. Furthermore, the eigenmodes at the boundary of these winding regions must be delocalised, allowing them to alternatively be characterised by the Lyapunov condition $L(\lambda) = 0$. 

In \cref{fig:disordered-winding}, we draw the contour $L(\lambda) = 0$ for the Lyapunov exponent of the total disordered system and can see that, as in the Toeplitz case, it cleanly divides the complex plane into an exterior region with a positive Lyapunov exponent and an interior region with a negative Lyapunov exponent. 

Furthermore, these disordered winding regions appear to be an interpolation of the winding regions for the monomer and dimer blocks. This is intuitive in light of the fact that we recover the periodic dimer and periodic monomer settings in the extreme cases of $p_m=0$ and $p_m=1$. 
Consequently, as the monomer block density increases, the disordered winding region moves to more closely resemble that of the monomer block, causing it to shrink around the hybridisation region, which in turn causes some of the eigenvalues to lie outside the disordered winding regions and become localised in the bulk.

\subsection{Lyapunov exponent estimate}
\begin{figure}
    \centering
    \includegraphics[width=0.95\textwidth]{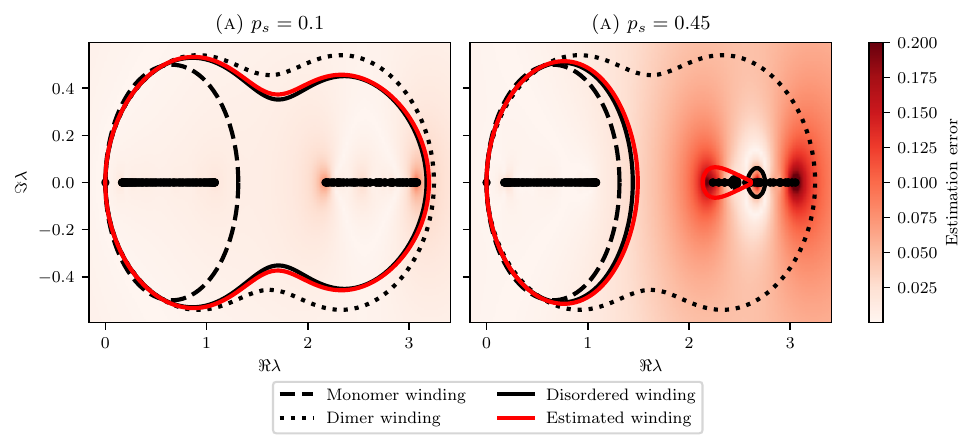}
    \caption{Spectrum, block symbols and Lyapunov exponent estimation error for the same block disordered systems as those in \cref{fig:disordered-winding}. We can see that the estimate is quite accurate away from the eigenvalues and allows for an accurate prediction of the disordered winding region.}
    \label{fig:Lyapunov-estimate}
\end{figure}

In this subsection, we aim to adapt the Lyapunov exponent estimate from \cite[Section 4.3]{ammari.barandun.ea2025Subwavelength} to the non-reciprocal setting. 

Applying the estimate $$\ds \frac{1}{N}\ln \norm{\Tilde{P}_{tot}(\lambda)}\approx \mathbb{E} \bigg[\frac{M}{N} \ln \rho(\Tilde{P}_{B_{\oo{0}}}(\lambda)) \bigg]$$ 
to the symmetrised propagation matrix with $\mathbb{E}$ denoting the expectation, we find the following estimation:
\begin{equation}\label{eq:symmlyapunovestimate}
 \ds  \frac{1}{N}\ln \norm{\Tilde{P}_{tot}(\lambda)} \approx \ds \frac{\ds \sum_{d=1}^Dp_d \ln \rho(\Tilde{P}_{B_d}(\lambda))}{\ds \sum_{d=1}^Dp_d \len(B_d)},
\end{equation}
where $\rho (A)$ denotes the \emph{spectral radius} of the matrix $A$.
Furthermore, due to the law of large numbers, the non-reciprocal decay $\frac{1}{2N}\sum_{i=1}^N\gamma_il_i$ can be very accurately estimated by
\begin{equation}\label{eq:gammaestimate}
    \begin{gathered}
    \frac{1}{2N}\sum_{i=1}^N\gamma_i\ell_i = \frac{1}{2M}\sum_{j=1}^M\frac{M}{N}\sum_{k=1}^{\len (B_{\oo{j}})}\gamma_k(B_{\oo{j}})\ell_k(B_{\oo{j}}) \\\approx \mathbb{E} \bigg[\frac{M}{2N}\sum_{k=1}^{\len (B_{\oo{0}})}\gamma_k(B_{\oo{0}})\ell_k(B_{\oo{0}}) \bigg] = \ds \frac{1}{2}\frac{\ds \sum_{d=1}^Dp_d\sum_{k=1}^{\len (B_d)}\gamma_k(B_d)\ell_k(B_d)}{\ds \sum_{d=1}^Dp_d \len(B_d)},
    \end{gathered}
\end{equation}
where $\gamma_k(B_d)$ and $\ell_k(B_d)$ denote the imaginary gauge potential and the length of the $k$\textsuperscript{th} resonator of block $B_d$, respectively.

Combining \cref{eq:symmlyapunovestimate} and \cref{eq:gammaestimate}, we get the following estimate for the total Lyapunov exponent:
\begin{equation}
    L(\lambda) \approx  \frac{\ds \sum_{d=1}^Dp_d \left(\ln \rho(\Tilde{P}_{B_d}(\lambda)) - \frac{1}{2}\sum_{k=1}^{\len (B_d)}\gamma_k(B_d)\ell_k(B_d)\right)}{\ds \sum_{d=1}^Dp_d \len(B_d)}.
\end{equation}

The agreement of this estimate with the actual Lyapunov exponent $L(\lambda)$ is illustrated in \cref{fig:Lyapunov-estimate}. We can see that, as in the Hermitian setting, this estimate becomes more accurate as $\lambda$ moves further into the bandgaps and struggles, especially around the upper band. 
Nevertheless, it allows for a quite close reconstruction of the disordered winding region, and thus the distinction between the eigenmodes localised in the bulk and those localised at the edge.

\subsection{Disorder acting as insulation}
Finally, in this subsection, we aim to highlight one particular consequence of the competition between bulk and edge localisations. Namely, how the disorder acts as an \emph{insulation} against the non-Hermitian skin effect, preventing edge localisation for small $\gamma$. In contrast, for finite periodic systems consisting of repeated identical unit cells, any small $\gamma>0$ leads to complete localisation of all eigenmodes at the edge.

To illuminate this phenomenon, we consider non-reciprocal block disordered systems consisting of monomers and dimers as in \cref{ex:standard_blocks} with a uniform imaginary gauge potential $\gamma$. This has the advantage that, following \cref{eq:gammaestimate}, the non-reciprocal decay $\frac{1}{2N}\sum_{i=1}^N\gamma_i\ell_i$ is simply approximated by $\frac{2}{3}\gamma$. 

As discussed in \cref{fig:spectralregions}, these blocks lead to systems exhibiting both shared pass bands and hybridisation regions. In \cref{fig:disordered-winding,fig:eve_competition}, we observe bulk localisation in the hybridisation region for $\gamma=1$, but none in the shared pass band. This is because in the shared pass band, the bulk localisation due to disorder is much weaker than the edge localisation induced by $\gamma=1$.

\begin{figure}
    \centering
    \includegraphics[width=0.6\linewidth]{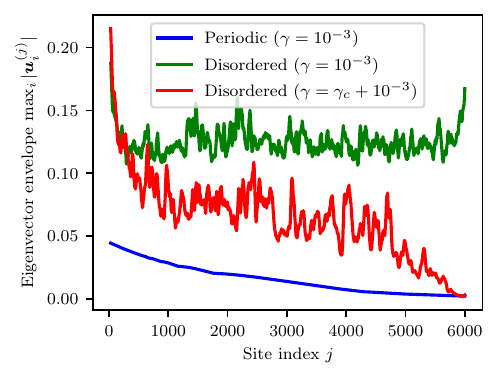}
    \caption{Eigenvector envelope $j\mapsto \max_{\lambda_i\leq\frac{3}{2}} \abs{\bm u_i^{(j)}}$ for both finite periodic ($p_m=0$) and block disordered ($p_m=\frac{1}{2}$) systems ($N=6000$, standard blocks) with varying $\gamma$. While any $\gamma>0$ leads to an exponentially decaying envelope in the periodic case, in the block disordered case, this only holds for $\gamma>\gamma_c$.}
    \label{fig:eigenvector-envelope}
\end{figure}

However, for very small $\gamma$, such a bulk localising effect can even be observed in the shared pass band. In \cref{fig:eigenvector-envelope}, we plot the \emph{eigenvector envelope} over the lower shared pass band\footnote{For small $\gamma>0$, this shared pass band is approximately $[0,1]$. However, since there is a bandgap between this lower band and the upper hybridisation region, it is simpler to just identify the eigenvalues in the lower band by the condition $\lambda_i\leq \frac{3}{2}$.} (\emph{i.e.}, the mapping $j\mapsto \max_{\lambda_i\leq\frac{3}{2}} \abs{\bm u_i^{(j)}}$) for periodic as well as block disordered systems. 

In the periodic case, which is obtained by only choosing dimer blocks $p_m=0$, even for a small $\gamma=10^{-3}$, the eigenvector envelope decreases exponentially with the site index $j$, indicating an edge localisation for \emph{all} modes in the shared pass band. In contrast, for the block disordered case with $p_m=\frac{1}{2}$ and $\gamma=10^{-3}$, the envelope stays of order one as the site index $j$ increases. This is indicative of the fact that the non-reciprocal edge localisation cannot overcome the disorder, leading to a localisation in the bulk.

By numerically calculating the maximal Lyapunov exponent $L(\lambda_i)$ over all eigenvalues in the shared pass band and using the estimate $\frac{1}{2N}\sum_{i=1}^N\gamma_i\ell_i \approx \frac{2}{3}\gamma$, we are able to identify the \emph{critical imaginary gauge potential}, $\gamma_c\approx 10^{-2}$, that is strong enough to cause edge localisation\footnote{This calculation hinges on the fact that $\frac{1}{N}\ln\norm{\Tilde{P}_{tot}(\lambda)}$ is insensitive with respect to $\gamma$, which empirically holds for small $\gamma>0$.}.
In fact, by choosing $\gamma=\gamma_c+10^{-3}$ slightly \emph{above} $\gamma_c$, we can see that the eigenvector envelope now decays to zero as the site index $j$ increases, indicating a localisation at the edge for all eigenvectors in the shared pass band.

\section*{Acknowledgments}
  The work of AU was supported by the Swiss National Science Foundation grant number 200021--200307. 

\section*{Code availability}
The software used to produce the numerical results in this work is openly available at \\ \href{https://doi.org/10.5281/zenodo.15793645}{https://doi.org/10.5281/zenodo.15793645}.

\printbibliography

@article{longhi.gatti.ea2015Robust,
  title = {Robust Light Transport in Non-{{Hermitian}} Photonic Lattices},
  author = {Longhi, Stefano and Gatti, Davide and Valle, Giuseppe Della},
  date = {2015},
  journaltitle = {Sci. Rep.},
  volume = {5},
  number = {1},
}

@article{ammari2023skin3d,
	title={The Non-{H}ermitian Skin Effect With Three-Dimensional Long-Range Coupling},
	author={Ammari, Habib and Barandun, Silvio and Cao, Jinghao and Davies, Bryn and Hiltunen, Erik Orvehed and Liu, Ping},
	journal={J. Eur. Math. Soc. (JEMS), to appear},
	year={2025},
doi={10.4171/JEMS/1685},
}

@article{franca.konye.ea2022Nonhermitian,
  title = {Non-Hermitian Physics without Gain or Loss: {{The}} Skin Effect of Reflected Waves},
  author = {Franca, Selma and K\"onye, Viktor and Hassler, Fabian and van den Brink, Jeroen and Fulga, Cosma},
  options = {useprefix=true},
  date = {2022},
  journaltitle = {Phys. Rev. Lett.},
  volume = {129},
  number = {8},
}

@article{wang.wang.ea2022NonHermitian,
  title = {Non-{{Hermitian}} Morphing of Topological Modes},
  author = {Wang, Wei and Wang, Xulong and Ma, Guancong},
  date = {2022},
  journaltitle = {Nature},
  volume = {608},
  number = {7921},
  pages = {50--55},
}

@article{zhang.zhang.ea2022review,
  title = {A Review on Non-{{Hermitian}} Skin Effect},
  author = {Zhang, Xiujuan and Zhang, Tian and Lu, Ming-Hui and Chen, Yan-Feng},
  date = {2022},
  journaltitle = {Adv. Phys.: X},
  volume = {7},
  number = {1},
  pages = {2109431},
}

@article{lin.tai.ea2023Topological,
  title = {Topological Non-{{Hermitian}} Skin Effect},
  author = {Lin, Rijia and Tai, Tommy and Li, Linhu and Lee, Ching Hua},
  date = {2023},
  journaltitle = {Frontiers of Physics},
  volume = {18},
  number = {5},
  pages = {53605},
}

@article{wang.chong2023NonHermitian,
  title = {Non-{{Hermitian}} Photonic Lattices: Tutorial},
  author = {Wang, Qiang and Chong, Y. D.},
  date = {2023},
  journaltitle = {J. Opt. Soc. Am. B},
  volume = {40},
  number = {6},
  pages = {1443--1466},
}

@article{leykam.bliokh.ea2017Edge,
  title = {Edge Modes, Degeneracies, and Topological Numbers in Non-Hermitian Systems},
  author = {Leykam, Daniel and Bliokh, Konstantin Y. and Huang, Chunli and Chong, Y. D. and Nori, Franco},
  date = {2017},
  journaltitle = {Phys. Rev. Lett.},
  volume = {118},
  number = {4},
  pages = {040401},
}

@article{borgnia.kruchkov.ea2020Nonhermitian,
  title = {Non-Hermitian Boundary Modes and Topology},
  author = {Borgnia, Dan S. and Kruchkov, Alex Jura and Slager, Robert-Jan},
  date = {2020},
  journaltitle = {Phys. Rev. Lett.},
  volume = {124},
  number = {5},
  pages = {056802},
}

@article{okuma.kawabata.ea2020Topological,
  title = {Topological Origin of Non-{{Hermitian}} Skin Effects},
  author = {Okuma, Nobuyuki and Kawabata, Kohei and Shiozaki, Ken and Sato, Masatoshi},
  date = {2020},
  journaltitle = {Phys. Rev. Lett.},
  volume = {124},
  number = {8},
  pages = {086801, 7},
}

@article{okuma.sato2023Nonhermitian,
  title = {Non-Hermitian Topological Phenomena: A Review},
  author = {Okuma, Nobuyuki and Sato, Masatoshi},
  date = {2023},
  journaltitle = {Annual Rev. Cond. Matt. Phys.},
  volume = {14},
  number = {1},
  pages = {83--107},
}

@article{yao.wang2018Edge,
  title = {Edge States and Topological Invariants of Non-Hermitian Systems},
  author = {Yao, Shunyu and Wang, Zhong},
  journaltitle = {Phys. Rev. Lett.},
volume = {121},
  pages = {086803},
  year = {2018},
}

@article {stabilityskin,
    AUTHOR = {Ammari, Habib and Barandun, Silvio and Davies, Bryn and
              Hiltunen, Erik Orvehed and Liu, Ping},
     TITLE = {Stability of the non-{H}ermitian skin effect in one dimension},
   JOURNAL = {SIAM J. Appl. Math.},
  FJOURNAL = {SIAM Journal on Applied Mathematics},
    VOLUME = {84},
      YEAR = {2024},
    NUMBER = {4},
     PAGES = {1697--1717},
      ISSN = {0036-1399,1095-712X},
   MRCLASS = {35B34 (15B57 35C20 35P25 81Q12)},
  MRNUMBER = {4781658},
MRREVIEWER = {Zhenhai\ Liu},
       DOI = {10.1137/23M1610537},
       URL = {https://doi.org/10.1137/23M1610537},
}

@article{ammari.barandun.ea2024Spectra,
    AUTHOR = {Ammari, Habib and Barandun, Silvio and De Bruijn, Yannick and
              Liu, Ping and Thalhammer, Clemens},
     TITLE = {Spectra and pseudo-spectra of tridiagonal {$k$}-{T}oeplitz
              matrices and the topological origin of the non-{H}ermitian
              skin effect},
   JOURNAL = {J. Phys. A},
  FJOURNAL = {Journal of Physics. A. Mathematical and Theoretical},
    VOLUME = {58},
      YEAR = {2025},
    NUMBER = {20},
     PAGES = {Paper No. 205201, 29},
}

@article{ammari.barandun.ea2025Generalized,
  title = {Generalized {{Brillouin}} Zone for Non-Reciprocal Systems},
  author = {Ammari, Habib and Barandun, Silvio and Liu, Ping and Uhlmann, Alexander},
  year = {2025},
 JOURNAL = {Proc. A.},
  volume = {481},
  number = {2310},
  pages = {20240643},
  publisher = {Royal Society},
  doi = {10.1098/rspa.2024.0643},
  abstract = {Non-reciprocal physical systems, such as waveguides with imaginary gauge potentials or mass-spring chains with active elements, present new mathematical challenges beyond those of ordinary Hermitian structures. In a phenomenon known as the non-Hermitian skin effect, eigenmodes condensate at one edge of the structure and decay exponentially in space. Traditional Floquet--Bloch theory, relying on real-valued quasi-periodicities, fails to capture this decay and thus gives incomplete predictions of spectral behaviour. In this paper, we develop a rigorous theory to address this shortcoming. By extending the Brillouin zone into the complex plane, we introduce the notion of generalized Brillouin zone. We show that this complex formulation reflects the unidirectional decay inherent in non-reciprocal systems and provides an accurate framework for spectral analysis. Our main results are proven in the general context of {$k$}-Toeplitz matrices and operators, which model a wide variety of both finite and semi-infinite or infinite non-Hermitian periodic structures. We demonstrate that our generalized Floquet--Bloch formalism correctly identifies spectra and restores spectral convergences for large finite lattices.},
  keywords = {generalized Brillouin zone,Laurent operators,non-Hermitian skin effect,non-reciprocal systems,spectral convergence,Toeplitz matrices and operators}
}

@article{ammari.barandun.ea2025Subwavelength,
  title = {Subwavelength {{Localisation}} in {{Disordered Systems}}},
  author = {Ammari, Habib and Barandun, Silvio and Uhlmann, Alexander},
  year = {2025},
journal={ Proc. Royal Soc. A},
volume={481}, 
number={2319},
pages={48120250407},
doi={10.1098/rspa.2025.0407},
}

@misc{ammari.barandun.ea2025Universal,
  title = {Universal Estimates for the Density of States for Aperiodic Block Subwavelength Resonator Systems},
  author = {Ammari, Habib and Barandun, Silvio and Davies, Bryn and Hiltunen, Erik Orvehed and Uhlmann, Alexander},
  year = {2025},
  number = {arXiv:2505.16677},
  eprint = {2505.16677},
  publisher = {arXiv},
  doi = {10.48550/arXiv.2505.16677},
}

@article{ammari.davies.ea2024Functional,
  title = {Functional Analytic Methods for Discrete Approximations of Subwavelength Resonator Systems},
  author = {Ammari, Habib and Davies, Bryn and Hiltunen, Erik Orvehed},
  year = {2024},
  journal = {Pure and Applied Analysis},
  volume = {6},
  number = {3},
  eprint = {2106.12301},
  primaryclass = {math},
  pages = {873--939},
  issn = {2578-5885, 2578-5893},
  doi = {10.2140/paa.2024.6.873},
  abstract = {We survey functional analytic methods for studying subwavelength resonator systems. In particular, rigorous discrete approximations of Helmholtz scattering problems are derived in an asymptotic subwavelength regime. This is achieved by re-framing the Helmholtz equation as a non-linear eigenvalue problem in terms of integral operators. In the subwavelength limit, resonant states are described by the eigenstates of the generalised capacitance matrix, which appears by perturbing the elements of the kernel of the limiting operator. Using this formulation, we are able to describe subwavelength resonance and related phenomena. In particular, we demonstrate large-scale effective parameters with exotic values. We also show that these systems can exhibit localised and guided waves on very small length scales. Using the concept of topologically protected edge modes, such localisation can be made robust against structural imperfections.},
  archiveprefix = {arXiv},
  keywords = {Mathematical Physics,Mathematics - Analysis of PDEs,Mathematics - Mathematical Physics}
}

@article{yokomizo.yoda.ea2022NonHermitiana,
  title = {Non-{{Hermitian}} Waves in a Continuous Periodic Model and Application to Photonic Crystals},
  author = {Yokomizo, Kazuki and Yoda, Taiki and Murakami, Shuichi},
  journaltitle = {Phys. Rev. Res.},
year={2022},
  volume = {4},
  number = {2},
  pages = {023089},
  publisher = {{American Physical Society}},
  doi = {10.1103/PhysRevResearch.4.023089},
  url = {https://link.aps.org/doi/10.1103/PhysRevResearch.4.023089},
  pagetotal = {13}
}

@article{rivero.feng.ea2022Imaginary,
  title = {Imaginary Gauge Transformation in Momentum Space and Dirac Exceptional Point},
  author = {Rivero, Jose H. D. and Feng, Liang and Ge, Li},
  date = {2022},
  journaltitle = {Phys. Rev. Lett.},
  volume = {129},
  number = {24},
  pages = {243901},
}

@article{ammariMathematicalFoundationsNonHermitian2024,
    AUTHOR = {Ammari, Habib and Barandun, Silvio and Cao, Jinghao and
              Davies, Bryn and Hiltunen, Erik Orvehed},
     TITLE = {Mathematical foundations of the non-{H}ermitian skin effect},
   JOURNAL = {Arch. Ration. Mech. Anal.},
  FJOURNAL = {Archive for Rational Mechanics and Analysis},
    VOLUME = {248},
      YEAR = {2024},
    NUMBER = {3},
     PAGES = {Paper No. 33, 34},
       DOI = {10.1007/s00205-024-01976-y},
       URL = {https://doi.org/10.1007/s00205-024-01976-y},
}

@article{ghatak.brandenbourger.ea2020Observation,
  title = {Observation of Non-{{Hermitian}} Topology and Its Bulk--Edge Correspondence in an Active Mechanical Metamaterial},
  author = {Ghatak, Ananya and Brandenbourger, Martin and {\noopsort{wezel}}{van Wezel}, Jasper and Coulais, Corentin},
  year = {2020},
   JOURNAL = {Proc. Natl. Acad. Sci. USA},
  volume = {117},
  number = {47},
  pages = {29561--29568},
  publisher = {Proceedings of the National Academy of Sciences},
  doi = {10.1073/pnas.2010580117},
  abstract = {Topological edge modes are excitations that are localized at the materials' edges and yet are characterized by a topological invariant defined in the bulk. Such bulk--edge correspondence has enabled the creation of robust electronic, electromagnetic, and mechanical transport properties across a wide range of systems, from cold atoms to metamaterials, active matter, and geophysical flows. Recently, the advent of non-Hermitian topological systems---wherein energy is not conserved---has sparked considerable theoretical advances. In particular, novel topological phases that can only exist in non-Hermitian systems have been introduced. However, whether such phases can be experimentally observed, and what their properties are, have remained open questions. Here, we identify and observe a form of bulk--edge correspondence for a particular non-Hermitian topological phase. We find that a change in the bulk non-Hermitian topological invariant leads to a change of topological edge-mode localization together with peculiar purely non-Hermitian properties. Using a quantum-to-classical analogy, we create a mechanical metamaterial with nonreciprocal interactions, in which we observe experimentally the predicted bulk--edge correspondence, demonstrating its robustness. Our results open avenues for the field of non-Hermitian topology and for manipulating waves in unprecedented fashions.}
}

@article{hatano.nelson1996Localization,
  title = {Localization {{Transitions}} in {{Non-Hermitian Quantum Mechanics}}},
  author = {Hatano, Naomichi and Nelson, David R.},
  year = {1996},
  journal = {Phys. Rev. Lett.},
  volume = {77},
  number = {3},
  pages = {570--573},
  publisher = {American Physical Society},
  doi = {10.1103/PhysRevLett.77.570},
  abstract = {We study the localization transitions which arise in both one and two dimensions when quantum mechanical particles described by a random Schr{\"o}dinger equation are subjected to a constant imaginary vector potential. A path-integral formulation relates the transition to flux lines depinned from columnar defects by a transverse magnetic field in superconductors. The theory predicts that, close to the depinning transition, the transverse Meissner effect is accompanied by stretched exponential relaxation of the field into the bulk and a diverging penetration depth.}
}

@preamble{ "\providecommand{\noopsort}[1]{} " }

@unpublished{UniformHyperbolicityInPrep,
  author = {Ammari, Habib and Thalhammer, Clemens and Uhlmann, Alexander},
  title  = {Uniform Hyperbolicity for Systems of Subwavelenght Resonators},
  year   = {2025},
  note   = {manuscript in preparation}
}

\appendix

\end{document}